\definecolor{darkred}  {rgb}{0.5,0,0}
\definecolor{darkblue} {rgb}{0,0,0.5}
\definecolor{darkgreen}{rgb}{0,0.5,0}
\protected\def\tikz@nonactivecolon{\ifmmode\mathrel{\mathop\ordinarycolon}\else:\fi}
\crefname{lemma}{Lemma}{Lemmas}
\crefname{proposition}{Proposition}{Propositions}
\crefname{definition}{Definition}{Definitions}
\crefname{theorem}{Theorem}{Theorems}
\crefname{corollary}{Corollary}{Corollaries}
\crefname{claim}{Claim}{Claims}
\crefname{claimalt}{Claim}{Claims}
\crefname{section}{Section}{Sections}
\crefname{appendix}{Appendix}{Appendices}
\crefname{figure}{Fig.}{Figs.}
\crefname{table}{Table}{Tables}
\crefname{algorithm}{Algorithm}{Algorithms}
\crefname{enumi}{part}{parts}
\newcommand{\ket}[1]{|{#1}\rangle}
\newcommand{\bra}[1]{\langle{#1}|}
\newcommand{\ketbra}[2]{|{#1}\rangle\langle{#2}|}
\newcommand{\proj}[1]{\ketbra{#1}{#1}}
\newcommand{\x}{\otimes}
\newcommand{\xp}[1]{^{\otimes #1}}
\DeclarePairedDelimiter{\set}{\lbrace}{\rbrace}
\DeclarePairedDelimiter{\of}{\lparen}{\rparen}
\DeclarePairedDelimiter{\ceiling}{\lceil}{\rceil}
\let\originalleft\left
\let\originalright\right
\renewcommand{\left}{\mathopen{}\mathclose\bgroup\originalleft}
\renewcommand{\right}{\aftergroup\egroup\originalright}
\newcommand*\unevendelim[3]{{\mathpalette\unevendelim@{{#1}{#2}{#3}}}}
\def\unevendelim@#1#2{\unevendelim@@{#1}#2}%
\def\unevendelim@@#1#2#3#4{%
    \sbox0{$\m@th#1#4$}%
    \sbox6{$\m@th#1\{\}$}%
    \unevendelim@@@{#1}{\left#2}{\right.}%
    \copy0
    \unevendelim@@@{#1}{\left.}{\right#3}%
}
\def\unevendelim@@@#1#2#3{%
  \sbox2{$\m@th#1#2\rule{0pt}{\ht0}#3$}%
  \sbox4{$\m@th#1#2\rule[-\dp0]{0pt}{\dp0}#3$}%
  \ifdim\ht2>\ht4
    \ooalign{\clipbox{0pt {\dimexpr\dp2-\dp6\relax} 0pt 0pt}{\copy2}\cr
             \raisebox{-\dp6}{\clipbox{0pt 0pt 0pt {\dimexpr2\ht2-\ht4+\dp6}}{\copy2}}\cr}
  \else
    \ooalign{\raisebox{\ht6}{\clipbox{0pt {\dimexpr2\dp4-\dp2+\ht6} 0pt 0pt}{\copy4}}\cr
             \clipbox{0pt 0pt 0pt {\dimexpr\ht4-\ht6}}{\copy4}\cr}
  \fi
}
\DeclareMathOperator{\Tr}{Tr}
\DeclareMathOperator{\poly}{poly}
\newcommand{\SWAP}{\textsc{Swap}}
\newcommand{\PURIFY}{\textsc{Purify}}
\newcommand{\smx}[1]{\bigl(\begin{smallmatrix}#1\end{smallmatrix}\bigr)}
\newcommand{\C}{\mathbb{C}}
\newcommand{\U}[1]{\mathrm{U}(#1)}
\renewcommand{\S}[1]{\mathrm{S}_{#1}}
\newcommand{\SC}{\mathrm{C}} 
\newcommand{\e}{\delta}
\newcommand{\D}{\Delta}
\newtheorem{theorem}{Theorem}
\newtheorem{lemma}[theorem]{Lemma}
\newtheorem{proposition}[theorem]{Proposition}
\newtheorem{corollary}[theorem]{Corollary}
\newtheorem{claim}{Claim}
\newenvironment{claimproof}[1][\proofname]{\proof[#1]}{\endproof}
\newenvironment{claimd}[1]{
  
  \claimalt
}{\endclaimalt}
\newcommand{\w}{0.5cm}
\newcommand{\bx}[3]{
  \draw #3 (#1*\w-\w/2,-#2*\w-\w/2) rectangle (#1*\w+\w/2,-#2*\w+\w/2);
}
\newcommand{\mbx}[3]{
  \bx{#1}{#2}{\ifodd #3 [fill = lightgray] \fi}
  \ifnum #3 > 3 \fill (#1*\w,-#2*\w) circle [radius = 0.15*\w]; \fi
  \ifnum #3 = 3 \draw (#1*\w,-#2*\w) node {$x$}; \fi
}
\newcommand{\nbx}[3]{\onebox{#1}{#2}{\expandafter\nodeaux#3!\@nil}{\expandafter\redaux#3!\@nil}}
\newcommand{\onebox}[4]{
  \bx{#1}{#2}{#4}
  \draw (#1*\w,-#2*\w) node {#3};
}
\def\nodeaux#1!#2\@nil{%
  \if\relax\detokenize{#2}\relax
    \expandafter\@firstoftwo
  \else
    \expandafter\@secondoftwo
  \fi
  {#1}%
  {#1}%
}
\def\redaux#1!#2\@nil{%
  \if\relax\detokenize{#2}\relax
    \expandafter\@firstoftwo
  \else
    \expandafter\@secondoftwo
  \fi
  {}%
  {[fill=lightgray]}%
}
\begin{document}

\title{Streaming quantum state purification}

\author[1,2]{Andrew M.\ Childs}
\author[1,2,3,6]{Honghao Fu}
\author[1,4]{Debbie Leung}
\author[4]{Zhi Li}
\author[5]{Maris Ozols}
\author[1]{Vedang Vyas}


\affil[1]{{\footnotesize Institute for Quantum Computing, University of Waterloo,
    Waterloo, ON, N2L 3G1, Canada}}
\affil[2]{Department of Computer Science, Institute for Advanced Computer Studies, and Joint Center for Quantum Information and Computer Science, University of Maryland, College Park, MD 20742, USA}
\affil[3]{Massachusetts Institute of Technology, 77 Massachusetts Ave., Cambridge, MA 02139, USA}
\affil[4]{Perimeter Institute for Theoretical Physics, 31 Caroline St.\ N., Waterloo, ON N2L 2Y5, Canada}
\affil[5]{QuSoft and University of Amsterdam, Science Park 123, 1098 XG, Amsterdam, the Netherlands}
\affil[6]{Concordia Institute of Information Systems Engineering, Concordia University, 1455 Blvd.\ De Maisonneuve Ouest, Montreal, QC H3G 1M8, Canada}

\date{}

\maketitle

\vspace*{-3ex}

\begin{abstract}
Quantum state purification is the task of recovering a nearly pure copy of an unknown pure quantum state using multiple noisy copies of the state.
This basic task has applications to quantum communication over noisy channels and quantum computation with imperfect devices, but has primarily 
been studied previously for the case of qubits.
We derive an efficient purification procedure based on the swap test for qudits of any dimension, starting with any initial error parameter.  Treating the initial error parameter and the dimension as constants, we show that our procedure has sample complexity asymptotically optimal in the final error parameter.
Our protocol has a simple recursive structure that can be applied when the states are provided one at a time in a streaming fashion, requiring only a small quantum memory to implement.
\end{abstract}

\setlength{\cftbeforesecskip}{4pt}
\renewcommand\cftsecafterpnum{\vskip2pt}
\setcounter{tocdepth}{2}

\vspace*{-4ex}

\tableofcontents

\section{Introduction}

Quantum states are notoriously susceptible to the effects of decoherence. Thus, a basic challenge in quantum information processing is to find ways of protecting quantum systems from noise, or of removing noise that has already occurred. In this paper we study the latter approach, aiming to (partially) reverse the effect of decoherence and produce less noisy states out of mixed ones, a process that we call \emph{purification}.

Since noise makes quantum states less distinguishable, it is impossible to purify a single copy of a noisy state. However, given multiple copies of a noisy state, we can try to reconstruct a single copy that is closer to the original pure state. In general, we expect the quality of the reconstructed state to be higher if we are able to use more copies of the noisy state. We would like to understand the number of samples that suffice to produce a high-fidelity copy of the original state, and to give efficient procedures for carrying out such purification.

More precisely, we focus on depolarizing noise and consider the following formulation of the purification problem.
Let $\ket{\psi} \in \C^d$ be an unknown pure $d$-dimensional state (or \emph{qudit}).
In the \emph{qudit purification problem}, we are given multiple noisy copies of $\ket{\psi}$ that are all of the form
\begin{equation}
  \rho(\e)
  := (1 - \e) \proj{\psi}
  + \e \, \frac{I}{d},
  \label{eq:re}
\end{equation}
where the \emph{error parameter}
$\e \in (0,1)$ represents the probability that the state is depolarized.
If we need to compare error parameters across different dimensions, we write $\e^{(d)}$ to specify the underlying dimension.

Let $\mathcal{P}$ denote a purification procedure, which is a
quantum operation that maps
$N$ copies of $\rho(\delta)$ to a single qudit. (Note that the implementation of $\mathcal{P}$ can make use of additional quantum workspace.) We refer to $N$ as the \emph{sample complexity} of $\mathcal{P}$. The goal is to produce a single qudit that closely approximates the desired (but unknown) target state $\ket{\psi}$.
In particular, we aim to produce a state whose fidelity with the ideal pure state $\ket{\psi}$ is within some specified tolerance, using the smallest possible sample complexity $N$. Our streaming protocol produces a state of the form $\rho(\epsilon)$ for some small $\epsilon>0$, which has fidelity $1-(1-\frac{1}{d})\epsilon$ with $\ket{\psi}$, so it is convenient to focus on the output error parameter $\epsilon$.

The purification problem was first studied over two decades ago.
Inspired by studies of entanglement purification \cite{B96}, Cirac, Ekert, and Macchiavello studied the problem of purifying a depolarized qubit and found the optimal purification procedure \cite{CEM99}. Later, Keyl and Werner studied qubit purification under different criteria, including the possibility of producing multiple copies of the output state and measuring the fidelity either by comparing a single output state or selecting all the output states \cite{KW01}.  Subsequent work studied probabilistic quantum state purification using semidefinite programming \cite{F04}. 
However, until recently, little was known about how to purify higher-dimensional quantum states.

Our main contribution is the first concrete and efficient qudit purification procedure, which achieves the following.
\begin{theorem}[Informal]\label{thm:informal}
	For any fixed input error parameter $\e \in (0,1)$, there is a protocol that produces a state with output error parameter at most $\epsilon \in (0,1)$ using $O(1/\epsilon)$ samples of the $d$-dimensional input state and $O(\frac{1}{\epsilon} \log d)$ elementary gates.
\end{theorem}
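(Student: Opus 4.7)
The plan is to iterate a single primitive, the qudit swap test: applying the symmetric-subspace projector $\Pi_{\mathrm{sym}} = \tfrac12(I+\SWAP)$ to two noisy copies on $\C^d\otimes\C^d$, discarding on failure, and on success tracing out one of the two output qudits. The first step is to analyze this primitive on equal inputs $\rho(\e)\otimes\rho(\e)$. Since $\rho(\e)$ commutes with every unitary fixing $\ket\psi$, the output inherits the same symmetry and must again be of the form $\rho(\e')$; the identity $\Tr_B[\SWAP\of{A\otimes B}] = AB$ then gives
\begin{equation*}
  \rho_{\mathrm{out}} = \frac{\rho(\e)+\rho(\e)^2}{1+\Tr[\rho(\e)^2]},\qquad
  p_{\mathrm{succ}} = \tfrac{1}{2}\of{1+\Tr[\rho(\e)^2]},
\end{equation*}
from which a short expansion yields $\e'\le \e/2 + O(\e^2)$ and $p_{\mathrm{succ}}\ge 1 - O(\e)$, with constants independent of $d$.

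The full protocol is organized as a balanced binary tree of swap tests: leaves are raw copies of $\rho(\e)$, and each internal node at level $k$ combines two successful level-$(k-1)$ outputs. Iterating the contraction, depth $k = O(\log(\e/\epsilon))$ suffices to push the error below $\epsilon$, so the tree has $2^k = O(\e/\epsilon)$ leaves. The expected number of raw leaves consumed per root output, accounting for retries after failed swap tests, is $\prod_{i=0}^{k-1}(2/p_i)$; because the errors $\e_i$ decay geometrically, $\sum_i(1-p_i) = O(\e)$, so $\prod_i p_i$ is bounded below by a positive constant depending only on the fixed $\e$. This gives expected sample complexity $O(1/\epsilon)$, and a standard concentration argument promotes it to a high-probability bound. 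Each internal node uses one controlled-$\SWAP$ on two $d$-dimensional registers, costing $O(\log d)$ elementary gates, so the total gate count is $O(\tfrac{1}{\epsilon}\log d)$. To make the protocol streaming with only $O(\log(1/\epsilon))$ qudits of memory, the tree is evaluated in post-order via a stack that holds at most one partially-purified state per level, pushing a new arrival at level $0$ and combining-and-promoting whenever two same-level entries appear.

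The main obstacle I anticipate is controlling the sample complexity when $\e$ is not infinitesimally small, since the bound $\e'\le\e/2 + O(\e^2)$ is only useful once $\e$ is small and the initial success probabilities $p_0,p_1,\dots$ can be as low as $\tfrac12$ (the uniform lower bound obtained from $\Tr[\rho^2]\ge 1/d$). I would address this by treating a constant number of initial levels separately, using only $p_k\ge\tfrac12$ to absorb a bounded initial overhead, then applying the geometric-product argument to the tail where $p_k\ge 1-O(\e_k)$, and finally using a Chernoff / optional-stopping argument on the random sequence of swap-test trials to convert the expected sample bound into a worst-case high-probability statement.
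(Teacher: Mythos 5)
Your proposal is correct and follows essentially the same route as the paper: the swap-test gadget with output $(\rho+\rho^2)/(1+\Tr\rho^2)$, the recursive binary tree with expected cost $2^n/\prod_i p_i$, and the stack-based streaming implementation are all exactly the paper's construction. Your treatment of the large-$\e$ regime (absorbing the initial levels into an $\e$-dependent constant using only $p_i \geq 1/2$) is much cruder than the paper's detailed iteration-count analysis, but since the theorem fixes $\e$ as a constant, this suffices for the informal statement.
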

\noindent
We present this protocol in \cref{sec:swap} and analyze its performance in \cref{sec:analysis}, establishing a formal version of the above result (\cref{thm:swap} and \cref{cor:gate_comp}).

In \cref{sec:apps}, we describe several applications of our results.  
In \cref{sec:simon}, we use our procedure to solve a version of Simon's problem with a faulty oracle. In particular, we show that if the oracle depolarizes its output by a constant amount, then the problem can still be solved with only quadratic overhead, preserving the exponential quantum speedup.
In \cref{sec:mt-qst}, we discuss the relationship between mixedness testing
(introduced and studied in \cite{MdW13,DW15,Wright-thesis-16}), quantum state purification, and quantum
state tomography. In some sense, these problems are increasingly difficult.  Our protocol implies a dimension-independent upper bound for
the sample complexity of mixedness testing for depolarized states,
below the worst-case lower bound shown in \cite{DW15}. 
On the other hand,
quantum state tomography can be used for quantum state purification,
even in a streaming fashion. We compare the performance of this approach and purification based on the
recursive swap test. The swap test is more efficient for 
large dimension and fixed input noise, whereas state tomography is more 
efficient for small dimension and very small signal.
In \cref{sec:q-majority}, we briefly discuss the relationship between quantum state purification and the quantum majority vote problem
for qubits, including how the former provides a higher-dimensional
generalization of the latter.

Rather than applying Schur--Weyl duality globally as in \cite{CEM99}, our streaming purification protocol recursively applies the swap test to pairs of quantum states.
This approach has two advantages. First, our procedure readily works for quantum states of any dimension.
Second, the swap test is much easier to implement and analyze than a general Schur basis measurement. In particular, the protocol can be implemented in a scenario where the states arrive in an online fashion, using a quantum memory of size only logarithmic in the total number of states used.

Since our protocol does not respect the full permutation symmetry of the problem, it is not precisely optimal. However, suppose we fix the dimension and the initial error parameter. Then, in the qubit case, our sample complexity matches that of the optimal protocol \cite{CEM99} up to factors involving these constants, suggesting that the procedure performs well. Furthermore, in \cref{sec:lowbdd} we prove a lower bound showing that our protocol has optimal scaling in the final error parameter up to factors involving the other constants.

The optimal purification protocol should respect the permutation symmetry of the input and hence can be characterized using Schur--Weyl duality. Some of the authors developed a framework for studying such optimal purification procedures, which led to a recently reported optimal protocol \cite{li2024optimal} (discussed further below).
We briefly describe this related work in \cref{sec:con} along with a summary of the results and a discussion of some open questions.

Since this manuscript was first made available as an arXiv preprint, our purification protocol has been applied to more general noisy states, with applications in the problem of principal eigenstate classical shadows \cite{GPS24}. Additional results connecting purification and shadow estimation can be found in \cite{zhou-liu-22,LLYZZ-2024}.  
Furthermore, alternative streaming purification protocols have been developed using semidefinite programming and block encoding techniques \cite{YCHCW24,yang2024quantum}. 

After our initial journal submission, \cite{li2024optimal} reported an optimal purification procedure for arbitrary qudit states using Schur--Weyl duality.  In particular, the optimal sample complexity is now known for arbitrary dimension.  A comparison with this optimal result shows that our method based on the recursive swap test has optimal expected sample complexity as a function of the dimension (in addition to the final error parameter), up to a constant factor, for initial error $\delta < 1/2$. Since the optimal protocol involves performing the Schur transform \cite{BCH06}  and applying a change of basis between different irreducible representations of the unitary group, it is unclear whether the optimal protocol can be implemented efficiently. 
Even if it can, it is unlikely to be practical. 
Moreover, the optimal protocol cannot be used as a gadget in the streaming model because its output is not a depolarized state, unlike the swap-test gadget. Hence, it requires a large quantum memory to store $N$ input states. In contrast, our streaming protocol only needs a quantum memory of size $O(\log(N))$ and very simple operations.
Therefore, our approach will likely be the method of choice in practice.

\section{Purification using the swap test}\label{sec:swap}

In this section, we present a recursive purification procedure based on the swap test.
We review the swap test in \cref{sec:swaptest} and define a gadget that uses it to project onto the symmetric subspace in \cref{sec:swapgadget}, and then we specialize the analysis to the case where the two input copies have the same purity. Next, we define our recursive purification procedure in \cref{sec:recursive}. Finally, we describe a stack-based implementation of this procedure in \cref{sec:stackbased}, providing a bound on the space complexity.

\subsection{The swap test} \label{sec:swaptest}

The \emph{swap test}~\cite{BCWD01} is given by the circuit shown in \cref{fig:swap}.
\newcommand{\round}{.. controls +(0.3,0) and +(-0.3,0) ..}
\begin{figure}[ht]
\centering
\begin{tikzpicture}[thick,
  box/.style = {fill = white, draw = black, inner sep = 0pt, text width = 0.55cm, text height = 0.55cm},
  arc/.style = {start angle = 90, end angle = 0, radius = 0.3cm},
  ell/.style = {x radius = 0.2cm, y radius = 0.1cm}]
  \def\w{0.4cm}
  \def\h{0.7cm}
  \def\d{0.4cm}
  \def\r{0.06cm}
  \path coordinate (0) at (-4*\w,2*\h);
  \path coordinate (1) at (-4*\w,1*\h);
  \path coordinate (2) at (-4*\w,0*\h);
  \path (0)+(-\d,0) node {$\ket{0}$};
  \path (1)+(-\d,0) node {$\rho$};
  \path (2)+(-\d,0) node {$\sigma$};
  \draw (0) -- ++(8*\w,0);
  \draw (1) -- ++(3*\w,0) \round ++(2*\w,-\h) -- ++(3*\w,0) arc[arc] circle[ell] ++(0.2cm,0) -- ++(-\r,-0.35cm) arc [start angle = 0, end angle = -180, x radius = 0.2cm-\r, y radius = 0.05cm] -- ++(-\r,0.35cm);
  \draw (2) -- ++(3*\w,0) \round ++(2*\w,+\h) -- ++(3*\w,0) -- ++(2*\w,0) ++(\d,0) node {\quad$\omega(a)$};
  \draw[double] (0)++(8*\w,0) -- ++(2*\w,0) ++(\d,0) node {$a$};
  \path (0)+(1.5*\w,0) node[box] {} node {$H$};
  \path (0)+(6.5*\w,0) node[box] {} node {$H$};
  \path (0)+(8.5*\w,0) node[box] {};
  \draw (0)++(8.5*\w+0.14cm,-0.1cm) arc [start angle = 0, end angle = 180, radius = 0.14cm];
  \draw (0)++(8.5*\w,-0.1cm) -- ++(0.1cm,0.2cm);
  \fill (0,2*\h) circle (0.08);
  \draw (0,2*\h) -- (0,0.5*\h);
\end{tikzpicture}
\caption{\label{fig:swap} Quantum circuit for the swap test on arbitrary input states $\rho$ and $\sigma$. The gate in the middle denotes the controlled-swap operation while $H = \frac{1}{\sqrt{2}} \smx{1&\phantom{-}1\\1&-1}$ is the Hadamard gate.}
\end{figure}
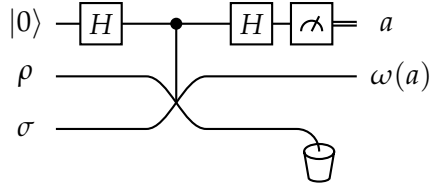
It takes two input registers 
of the same dimension, and requires one ancilla qubit initialized to~$\ket{0}$.  
The gate after the first Hadamard gate is the controlled-swap gate
$\proj{0} \x I + \proj{1} \x S$, where $I$ and $S$ denote respectively
the identity gate and the \emph{swap gate} ($\ket{i} \x \ket{j}
\mapsto \ket{j} \x \ket{i}$) acting on the last two registers.
Since the control qubit is prepared in the state $H \ket{0} = \frac{\ket{0}+\ket{1}}{\sqrt{2}}$ and then measured in the Hadamard basis $\frac{\ket{0}\pm\ket{1}}{\sqrt{2}}$, we can use the identity
\begin{equation}
\begin{tikzpicture}[thick, baseline = 0.3cm]
  \def\w{0.4cm}
  \def\h{0.7cm}
  \def\d{0.7cm}
  \path coordinate (0) at (-1*\w,2*\h);
  \path coordinate (1) at (-1*\w,1*\h);
  \path coordinate (2) at (-1*\w,0*\h);
  \path (0)+(    -\d,0) node {$\frac{\ket{0} + \ket{1}}{\sqrt{2}}$};
  \path (0)+(2*\w+\d,0) node {$\frac{\ket{0}\pm\ket{1}}{\sqrt{2}}$};
  \draw (0) -- ++(2*\w,0);
  \draw (1) \round ++(2*\w,-\h);
  \draw (2) \round ++(2*\w,+\h);
  \fill (0,2*\h) circle (0.08);
  \draw (0,2*\h) -- (0,0.5*\h);
  \path coordinate (R1) at (7*\w,1*\h);
  \path coordinate (R2) at (7*\w,0*\h);
  \path (R1)+(-2.3*\w,-0.5*\h) node {$=\;\;\displaystyle\frac{1}{2}$};
  \draw (R1) -- ++(2*\w,0);
  \draw (R2) -- ++(2*\w,0);
  \path (R1)+(3*\w,-0.5*\h) node {$\pm$};
  \draw (R1)++(4*\w,0) \round ++(2*\w,-\h);
  \draw (R2)++(4*\w,0) \round ++(2*\w,+\h);
  \path (R1)+( 6.6*\w,-0.5*\h) node {$\Bigg)$};
  \path (R1)+(-0.6*\w,-0.5*\h) node {$\Bigg($};
  \path (R1)+(10.5*\w,-0.5*\h) node {$=\;\;\displaystyle\frac{1}{2}(I \pm S)$};
\end{tikzpicture}
\end{equation}
to eliminate the control qubit when analyzing the output.  
Thus for joint input $\mu$ and measurement outcome $a \in \set{0,1}$ (see \cref{fig:swap}), the corresponding sub-normalized post-measurement state on the last two registers is
\begin{equation}
  \frac{I + (-1)^a S}{2} \; \mu \; \frac{I + (-1)^a S}{2},
  \label{eq:spoutboth}
\end{equation}
and the trace gives the probability of the outcome $a$.  
Note that $\frac{1}{2}(I+S)$ and $\frac{1}{2}(I-S)$ project onto the \emph{symmetric} and the \emph{anti-symmetric subspace}, respectively.  

The swap test is a standard procedure for comparing two unknown \emph{pure} quantum states
$\ket{\psi_{1,2}}$; 
in this case, $\mu = \proj{\psi_1} \otimes \proj{\psi_2}$.  
The probability to obtain the outcome $a=0$ decreases from $1$ to $1/2$ as the fidelity between the inputs decreases from $1$ to $0$ (i.e, as the inputs change from being identical to being orthogonal).  
If we instead apply the swap test to two copies of a mixed state $\rho$ (i.e.,
$\mu = \rho \x \rho$), the postmeasurement state
for the $a=0$ outcome has increased weights for matching eigenvectors on the two inputs, suggesting that the swap test can be used for purification.  
The swap test also potentially entangles the two output quantum registers, so our purification
algorithm keeps only one of the two quantum output registers.

We now make this intuition
concrete.  Starting from a product input $\mu = \rho \x \sigma$ in \cref{eq:spoutboth} and
discarding the last register yields the sub-normalized state
\begin{equation}
	\frac{1}{4} \of[\big]{\rho + \sigma + (-1)^a \rho \sigma + (-1)^a \sigma \rho},
  \label{eq:final_state}
\end{equation}
where the last two terms are obtained from the identity
$\Tr_2 [(\rho \otimes \sigma) S] = \rho \sigma$. 

The probability of obtaining outcome $a$ is given by the trace of \cref{eq:final_state},
\begin{equation}
  \frac{1}{2} \of[\big]{1 + (-1)^a \Tr(\rho\sigma)},
  \label{eq:trace}
\end{equation}
and the normalized output state $\omega(a)$ in \cref{fig:swap} is
\begin{equation}
  \omega(a)
  = \frac{1}{2} \cdot
    \frac{\rho + \sigma + (-1)^a (\rho \sigma + \sigma \rho)}
         {1 + (-1)^a \Tr(\rho\sigma)}.
  \label{eq:omega}
\end{equation}
Note that the probability of the $a=0$ outcome is at least $1/2$ for
a product input.  
Note also that the swap test can be implemented with $O(\log d)$ quantum gates since a swap gate between qudits can be decomposed into swaps of constituent qubits.

\subsection{The swap test gadget} \label{sec:swapgadget}

We are particularly interested in the output state $\omega(0)$ that corresponds to the measurement outcome $a = 0$.
This state is equal to the normalized projection of $\rho \x \sigma$ onto the symmetric subspace.
\Cref{alg:swap} introduces a \emph{$\SWAP$ gadget} that performs the swap test on fresh copies of the input states $\rho$ and $\sigma$ until it succeeds in obtaining the $a = 0$ outcome.

\begin{algorithm}[ht]
\caption{$\SWAP$ gadget.}
\label{alg:swap}
\emph{This procedure has access to a stream of inputs $\rho$ and $\sigma$, and uses as many copies of them as necessary.}
\begin{algorithmic}[1]
\Procedure{$\SWAP$}{$\rho,\sigma$}
\Repeat
  \State Ask for one fresh copy of $\rho$ and $\sigma$.
  \State Apply the swap test shown in \cref{fig:swap} and denote the measurement outcome by $a$.
\Until $a=0$
\State \Return the state $\omega(0)$ of the output register
\EndProcedure
\end{algorithmic}
\end{algorithm}

We denote the final output of the $\SWAP$ gadget by
\begin{equation}
  \SWAP(\rho,\sigma)
  := \frac{1}{2} \cdot
     \frac{\rho + \sigma + \rho \sigma + \sigma \rho}
          {1 + \Tr(\rho\sigma)},
  \label{eq:SWAP}
\end{equation}
which is just $\omega(0)$ from \cref{eq:omega}. The expected number of copies of each input state consumed by this gadget is given by the inverse of its success probability. By \cref{eq:trace}, this is $2/\of{1 + \Tr(\rho\sigma)}$.

We focus on applying the swap test gadget when $\rho = \sigma$.  
From \cref{eq:trace}, the outcome $a = 0$ in the swap test occurs with probability
\begin{equation}
  \frac{1}{2} \of[\big]{1 + \Tr (\rho^2)}
  \label{eq:prob}
\end{equation}
and, according to \cref{eq:SWAP}, the normalized output state is
\begin{equation}
  \SWAP(\rho,\rho) = \frac{\rho + \rho^2}{1+\Tr(\rho^2)}.
  \label{eq:rout}
\end{equation}
Note that $\SWAP(\rho,\rho)$ shares the same eigenvectors with $\rho$ and $\rho^2$.   
If the eigenvalues of $\rho$ are $\lambda_1 > \lambda_2 \geq \lambda_2 \geq \cdots \geq \lambda_d$, 
then the eigenvalues of $\SWAP(\rho,\rho)$ are 
\begin{equation}
\lambda_i' = \frac{\lambda_i + \lambda_i^2}{1+\sum_{j=1}^d \lambda_j^2} .  
\label{eq:evals}
\end{equation}
In particular, 
\begin{align}
\lambda_1' - \lambda_1 
& = \frac{\lambda_1 + \lambda_1^2}{1+\sum_{j=1}^d \lambda_j^2} - \lambda_1
= \frac{\lambda_1^2  \sum_{j=1}^d \lambda_j  - \lambda_1 \sum_{j=1}^d \lambda_j^2}{1+\sum_{j=1}^d \lambda_j^2} 
= \frac{ \lambda_1  \sum_{j=2}^d (\lambda_1 - \lambda_j) \lambda_j }{1+\sum_{j=1}^d \lambda_j^2},
\end{align}
which is positive unless $\rho$ is pure or maximally mixed.  
In particular, the following corollary holds for the noisy states we want to purify.  
\begin{corollary}\label{cor:better}
If $\e \in (0,1)$ then $\SWAP(\rho(\e),\rho(\e)) = \rho(\e')$ where $\e' < \e$.
\end{corollary}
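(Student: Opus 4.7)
The plan is to verify that $\SWAP(\rho(\e),\rho(\e))$ still lies in the depolarized family \cref{eq:re}, and then invoke the computation of $\lambda_1' - \lambda_1$ carried out immediately above the corollary to conclude that the associated error parameter has strictly decreased.

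First I would observe that $\rho(\e)$ has two distinct eigenvalues: $\lambda_1 = 1 - \e(d-1)/d$ on the one-dimensional eigenspace spanned by $\ket{\psi}$, and the $(d-1)$-fold degenerate eigenvalue $\lambda_2 = \cdots = \lambda_d = \e/d$ on the orthogonal complement. By \cref{eq:rout}, $\SWAP(\rho,\rho)$ is a polynomial in $\rho$ and hence shares its eigenbasis, with eigenvalues transformed by \cref{eq:evals}. Since that formula depends on $i$ only through $\lambda_i$, coinciding eigenvalues remain coinciding; thus $\SWAP(\rho(\e),\rho(\e)) = \lambda_1' \proj{\psi} + \lambda_2' (I - \proj{\psi})$. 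This is precisely $\rho(\e')$ for the unique $\e'$ determined by $\lambda_2' = \e'/d$ (equivalently $\lambda_1' = 1 - \e'(d-1)/d$), so the output is indeed a depolarized state.

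Next, the calculation displayed just above the corollary shows $\lambda_1' - \lambda_1 > 0$ whenever $\rho$ is neither pure nor maximally mixed, and the assumption $\e \in (0,1)$ excludes both extremes. Since the affine map $\e \mapsto 1 - \e(d-1)/d$ is strictly decreasing, the inequality $\lambda_1' > \lambda_1$ translates directly to $\e' < \e$.

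The only substantive step is showing that the gadget preserves the depolarized form, and that follows immediately from the degeneracy of the non-$\ket{\psi}$ eigenspace together with the fact that $\SWAP(\rho,\rho)$ acts diagonally in the eigenbasis of $\rho$; I do not anticipate any real obstacle, as the inequality $\e' < \e$ is handed to us by the pre-corollary computation.
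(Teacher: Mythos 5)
Your proposal is correct and follows essentially the same route as the paper: the text immediately preceding the corollary already establishes that $\SWAP(\rho,\rho)$ is diagonal in the eigenbasis of $\rho$ with eigenvalues given by \cref{eq:evals} and that $\lambda_1'-\lambda_1>0$ unless $\rho$ is pure or maximally mixed, and the paper treats the corollary as an immediate consequence for states of the form $\rho(\e)$. Your added remarks---that the degenerate eigenvalue stays degenerate so the output remains in the depolarized family, and that $\lambda_1'>\lambda_1$ is equivalent to $\e'<\e$ via the decreasing affine map $\e\mapsto 1-\e(d-1)/d$---just make explicit what the paper leaves implicit.
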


This corollary suggests a simple recursive algorithm that applies the $\SWAP$ gadget on a pair of identical inputs produced in the previous level of the recursion. \Cref{cor:better} then guarantees that at each subsequent level of recursion, the states become purer.  Before we formally state and analyze this recursive procedure, let us first derive the success probability of the swap test on two copies of $\rho(\e)$, the expected number of copies of $\rho(\e)$ consumed by the $\SWAP$ gadget, and the error parameter $\e'$ of the output state $\rho(\e')$.
The success probability of the swap test can be found from \cref{eq:prob} and the 
spectrum of $\rho(\e)$:
\begin{equation}
\begin{aligned}
  p(\e)
  &:= \frac{1}{2}
      \of[\bigg]{ 1 + 
        \of[\Big]{ (1-\e) + \frac{\e}{d} }^2 
      + (d-1) \of[\Big]{\frac{\e}{d}}^2
      } \\
  &\:= 1 - \of[\Big]{1 - \frac{1}{d}} \e + \frac{1}{2} \of[\Big]{1 - \frac{1}{d}} \e^2.
  \label{eq:p}
\end{aligned}
\end{equation}
Accordingly, the expected number of copies of $\rho(\e)$ consumed by the $\SWAP$ gadget is
\begin{equation}
	\label{eq:exp_cpy}
  	\frac{2}{p(\e)} = \frac{2d}{d - (d-1) \e + \frac{1}{2}(d-1) \e^2}.
\end{equation}
The output state of the $\SWAP$ gadget is 
$\SWAP \of[\big]{\rho(\e),\rho(\e)} = \rho(\e')$, where $\e'$ can be obtained 
from \cref{eq:evals}:
\begin{equation}
  \frac{\e'}{d} = \lambda_2' = \frac{ \frac{\e}{d} + \left( \frac{\e}{d} \right)^2}{2 p(\e)},
\end{equation}
so 
\begin{equation}
  \e' = \frac{\e + \e^2/d}{2 p(\e)} 
      = \frac{\e + \e^2/d}{2 - 2 \of[\Big]{1 - \frac{1}{d}} \e + \of[\Big]{1 - \frac{1}{d}} \e^2 }.
  \label{eq:e}
\end{equation}

If the inputs are only similar but not identical, the
above analysis holds approximately.  In \cref{sec:io}, we evaluate the
output error parameter of the $\SWAP$ gadget when the inputs are
$\rho(\e_1)$ and $\rho(\e_2)$ for arbitrarily $\e_1,\e_2$.  We present 
the region of $(\e_1,\e_2)$ for which the output has better purity
than both inputs, which includes the line $\e_1=\e_2$.  For large
$\e_1, \e_2$, the region becomes narrower as $d$ increases.  We
thus design the protocol to apply the swap test to states of equal
noise parameter.

\subsection{Recursive purification based on the swap test} \label{sec:recursive}

Given access to many copies of $\rho(\e)$ from \cref{eq:re}, with a given value of $\e \in (0,1)$, our goal is to produce a single copy of $\rho(\epsilon)$,
for some desired target parameter $\epsilon \ll \e$.
Since our purification procedure (\cref{alg:purify}) invokes itself recursively, it will be convenient to denote the initial state by $\rho_0 := \rho(\e)$
and the state produced at the $i$th level of the recursion by $\rho_i$.
We choose the total depth of the recursion later to guarantee a desired level of purity in the final output.

Before stating our algorithm more formally, we emphasize one slightly unusual aspect: even though we fix the total depth of the recursion, the number of calls \emph{within} each level is not known in advance. This is because each instance of the algorithm calls an instance in the previous recursion level an unknown number of times (as many times as necessary for the $\SWAP$ gadget to succeed). We analyze the expected number of calls and the total number of input states consumed in \cref{sec:complexity}.

\vspace*{1ex}

\begin{algorithm}[h]
\caption{Recursive purification based on the swap test.}
\label{alg:purify}
\emph{This procedure purifies a stream of states $\rho_0$ by recursively calling the $\SWAP$ gadget (\cref{alg:swap}).}
\begin{algorithmic}[1]
\Procedure{$\PURIFY$}{$n$}
\If{$n=0$}
\Comment{If at the bottom level of recursion,}
  \State \Return $\rho_0$
  \Comment{request one copy of $\rho_0$ from the stream and return it.}
\Else
\Comment{Otherwise call the next level of recursion}
  \State \Return $\SWAP(\PURIFY(n-1),\PURIFY(n-1))$
  \Comment{until the $\SWAP$ gadget succeeds.}
\EndIf
\EndProcedure
\end{algorithmic}
\end{algorithm}

\vspace*{1ex}

Let $\rho_n := \PURIFY(n)$ denote the output of a purification procedure of depth $n$. We can represent this procedure as a full binary tree of height $n$ whose $2^n$ leaves correspond to the original input states $\rho_0 = \rho(\e)$ and whose root corresponds to the final output state $\rho_n$.
For any $i \in \set{0,\dotsc,n}$, there are $2^{n-i}$ calls to $\PURIFY(i)$, each producing a copy of $\rho_i$ by combining two states from the previous level (or by directly consuming one copy of $\rho_0$ when $i=0$).
Every time a $\SWAP$ gadget fails, the recursion is restarted at that level.
This triggers a cascade of calls going down all the way to $\PURIFY(0)$,
which requests a fresh copy of $\rho_0$.
Hence, a failure at level $i$ results in at least $2^i$ fresh copies of $\rho_0$ being requested.
The procedure terminates once the outermost $\SWAP$ gadget succeeds.

According to \cref{cor:better}, the state $\rho_i$ is of the form $\rho(\e_i)$ for some error parameter $\e_i < \e_{i-1}$. Denote by $p_i$ the probability of success of the $\SWAP$ gadget in $\PURIFY(i)$ (in other words, $p_i$ is the probability of getting outcome $a = 0$ when the swap test is applied on two copies of $\rho_{i-1}$). According to \cref{eq:p,eq:e}, the parameter $p_i$ depends on $\e_i$, while $\e_i$ in turn is found via a recurrence relation:
\vspace*{-0.5ex}
\begin{align}
  p_i &= P(\e_{i-1}, d), &
  \e_i &= \D(\e_{i-1}, d), &
  \e_0 &= \delta,
  \label{eq:recurrence}
\end{align}
where the functions $P, \D\colon (0,1) \times [2,\infty) \to (0,1)$ are defined as follows:
\vspace*{-0.5ex}
\begin{align}
  P(\e, d) &:= 1 - \of[\Big]{1 - \frac{1}{d}} \e + \frac{1}{2} \of[\Big]{1 - \frac{1}{d}} \e^2, &
  \D(\e, d) &:= \frac{\e + \e^2/d}{2 P(\e, d)}.
  \label{eq:PD}
\end{align}
For any values of the initial purity $\e_0 \in (0,1)$ and dimension $d \geq 2$, these recursive relations define a pair of sequences $(p_1, p_2, \dotsc)$ and $(\e_0, \e_1, \dotsc)$.
Our goal is to understand the asymptotic scaling of the parameters $p_i$ and $\e_i$, and to determine a value $n$ such that $\e_n \leq \epsilon$, for some given $\epsilon > 0$.

\subsection{Non-recursive stack-based implementation}\label{sec:stackbased}

It is not immediately obvious from the recursive description of $\PURIFY(n)$ in \cref{alg:purify} that it uses $n+1$ qudits of quantum memory plus one ancilla qubit.
\cref{alg:stackbased purify} provides an alternative stack-based implementation of the same procedure that makes this clear.


\begin{algorithm}[ht]
\newcommand{\bul}{$\triangleright$~~}
\newcommand{\qudit}{\mathrm{qudit}}
\newcommand{\purity}{\mathrm{purity}}
\newcommand{\fetch}{FetchNewCopy}
\caption{Stack-based implementation of \cref{alg:purify}.}
\label{alg:stackbased purify}
\vspace*{0.5ex}
\emph{Global variables:}
\vspace*{-0.6ex}
\begin{algorithmic}[1]
  \Statex \bul \emph{$\qudit[\cdot]$ -- an array of qudits that simulates a stack},
  \Statex \bul \emph{$\purity[\cdot]$ -- an array of integers that store the purity level of each qudit},
  \Statex \bul \emph{$k$ -- an integer that indicates the current position in the stack}.
\end{algorithmic}
\emph{This procedure requests a fresh copy of $\rho_0$ from the stream and stores it in the stack.}
\begin{algorithmic}[1]
\Procedure{\fetch}{}
  \State $k$++
  \Comment{Move the stack pointer forward by one.}
  \State $\qudit[k] = \rho_0$
  \Comment{Fetch a fresh copy of $\rho_0$ and store it in the current stack position.}
  \State $\purity[k] = 0$
  \Comment{Set its purity level to $0$.}
\EndProcedure
\end{algorithmic}
\emph{Non-recursive implementation of the $\PURIFY(n)$ procedure from \cref{alg:purify}.}
\begin{algorithmic}[1]
\Procedure{Purify}{$n$}
  \State $k = 0$
  \Comment{Initially the stack is empty.}
  \State $\purity[0] = -1$
  \Comment{Set the purity level of empty stack to a non-existing value.}
  \Repeat
  \Comment{Repeat until $n$ levels of recursion are reached.}
    \State \Call{\fetch}{}
    \Comment{Request two fresh copies of $\rho_0$}
    \State \Call{\fetch}{}
    \Comment{and put them in the stack.}
    \Repeat
    \Comment{Keep merging the two top-most qudits.}
      \State Apply the swap test circuit shown in \cref{fig:swap} to $\qudit[k-1]$ and $\qudit[k]$,
      \Statex store the result in $\qudit[k-1]$, and let $a$ be the measurement outcome.
      \If{$a = 0$}
      \Comment{If the swap test succeeded,}
      \State $k = k-1$
      \Comment{move the pointer backwards by one (i.e., keep the improved qudit)}
      \State $\purity[k]$++
      \Comment{and increase its purity level by one.}
      \Else
      \Comment{If the swap test failed,}
      \State $k = k-2$
      \Comment{move the pointer backwards by two (i.e., discard both qudits).}
      \EndIf
    \Until{$a=1$ or $\purity[k-1] \neq \purity[k]$}
    \Comment{Stop if swap test failed or different purities.}
  \Until{$\purity[1] = n$}
  \Comment{Stop when $n$ levels of recursion are reached.}
  \State \Return{$\qudit[1]$}
  \Comment{Return the only remaining qudit.}
\EndProcedure
\end{algorithmic}
\end{algorithm}

\vspace*{-2ex}

\clearpage 

Note that throughout the execution of \cref{alg:stackbased purify}, $k \geq 0$, purity$[0] = -1$, and 
purity$[1] \geq$ purity$[2] \geq \cdots \geq$ purity$[k]$ with at most one equality. 
The last statement can be verified by noting that it is satisfied initially, and that it is preserved whenever any of the purity values are updated.
The fact that at most two purities can be equal implies the stated upper bound for the memory required.
Just as in the recursive formulation in \cref{alg:purify}, the swap test is always applied to 
two states of equal purity. In the first pass through the repeat loop in line 7, the swap test is applied to two new 
copies of $\rho_0$. In subsequent passes, the condition in line 14 ensures that the next inputs to the swap test have equal purity.

\section{Analysis of recursive purification}\label{sec:analysis}

The goal of this section is to analyze the complexity of the recursive
purification procedure $\PURIFY(n)$ defined in \cref{alg:purify}.
Most of the analysis deals with understanding the recurrence
relations in \cref{eq:recurrence,eq:PD}.
First, in \cref{sec:monotonicity} we establish monotonicity of the
functions $P$ and $\D$ from \cref{eq:PD} and the corresponding parameters $p_i$ and $\e_i$.
In \cref{sec:asymptotics}, we analyze the asymptotic scaling of the error parameter $\e_i$.
Finally, in \cref{sec:complexity}, we analyze the expected sample complexity of $\PURIFY(n)$.

\subsection{Monotonicity relations}\label{sec:monotonicity}

First we prove monotonicity of the functions $P$ and $\D$ in \cref{eq:PD}.

\begin{lemma}\label{claim:monotonicity}
For any $\e \in (0,1)$ and $d \geq 2$,
the function $P(\e,d)$ is strictly decreasing in both variables, and  
the function $\D(\e,d)$ is strictly increasing in both variables.
\end{lemma}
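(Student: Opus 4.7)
My plan is to verify each of the four monotonicity claims by reducing it to a sign check on a partial derivative. Three of them come out to direct one-line computations; only one needs a nontrivial algebraic identity.

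I start with $P$. Writing $P(\e,d) = 1 - (1-1/d)\,\e\,(1 - \e/2)$, I would compute $\partial_\e P = -(1-1/d)(1-\e)$ and $\partial_d P = -\frac{1}{d^2}\e(1-\e/2)$. Since $d \geq 2$ and $\e \in (0,1)$, each of $1-1/d$, $1-\e$, $\e$, and $1-\e/2$ is strictly positive, so both partials are strictly negative. This yields the first claim and, along the way, the useful side fact $P(\e,d) > 0$ on the region of interest (since $(1-1/d)\e(1-\e/2) < 1$), which keeps denominators safe below.

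For $\D = N/D$ with $N = \e + \e^2/d$ and $D = 2P$, monotonicity in $\e$ is immediate from the quotient rule: the numerator of $\partial_\e \D$ is $(1 + 2\e/d)\,2P + (\e + \e^2/d)\cdot 2(1-1/d)(1-\e)$, a sum of manifestly positive terms.

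The only step needing real work is showing $\partial_d \D > 0$. Here both $N$ and $D$ are decreasing in $d$ (from the $P$ computation above together with $\partial_d N = -\e^2/d^2$), so the sign of $\partial_d \D$ is not visible at a glance. Applying the quotient rule and factoring, I get
\begin{equation}
  \partial_d \D \;=\; \frac{N_d\, D - N\, D_d}{D^2} \;=\; \frac{2\e^2}{d^2\, D^2}\,\bigl[(1 + \e/d)(1 - \e/2) - P(\e,d)\bigr].
\end{equation}
The crux, and the only nontrivial computation in the lemma, is evaluating the bracket. Expanding both terms directly (or, more cleanly, writing $P = 1 - \e(1-\e/2) + \tfrac{1}{d}\e(1-\e/2)$ and $(1+\e/d)(1-\e/2) = (1-\e/2) + \tfrac{1}{d}\e(1-\e/2)$ so that the $1/d$ pieces cancel), the difference collapses to $\e(1-\e)/2$, which is strictly positive for $\e \in (0,1)$. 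This gives $\partial_d \D > 0$ and completes the lemma. The single identity $(1+\e/d)(1-\e/2) - P(\e,d) = \e(1-\e)/2$ is the one line of algebra to double-check; everything else is inspection.
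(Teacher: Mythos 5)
Your proposal is correct and follows essentially the same route as the paper: compute the four partial derivatives and check signs, with the only nontrivial step being $\partial_d \D$, where your identity $(1+\e/d)(1-\e/2) - P(\e,d) = \e(1-\e)/2$ reproduces exactly the paper's final expression $\partial_d \D = \frac{(1-\e)\e^3}{4d^2P(\e,d)^2} > 0$. The factored form $P = 1 - (1-1/d)\e(1-\e/2)$ is a minor notational convenience rather than a different argument.
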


\begin{proof}
To verify the monotonicity of $P$, note that
\begin{align}
  \frac{\partial P}{\partial \e}
  &= \of[\Big]{1 - \frac{1}{d}} \of{-1 + \e} < 0, &
  \frac{\partial P}{\partial d}
  &= \of[\Big]{-1 + \frac{\e}{2}} \frac{\e}{d^2} < 0.
\end{align}
Similarly, the partial derivatives of $\D$ satisfy
\begin{align}
  \frac{\partial \D}{\partial \e}
  &= \frac{1+2\e/d}{2P(\e,d)}
   - \frac{\e+\e^2/d}{2P(\e,d)^2} \cdot
     \frac{\partial}{\partial \e} P(\e,d)
   = \frac{(d + 2\e) P(\e,d) + \e (d+\e) (1-1/d) (1-\e)}{2 d P(\e,d)^2} > 0, \\
  \frac{\partial \D}{\partial d}
  &= \frac{-\e^2/d^2}{2 P(\e,d)}
   - \frac{\e+\e^2/d}{2 P(\e,d)^2} \cdot
     \frac{\partial}{\partial d} P(\e,d)
   = \frac{(1-\e)\e^3}{4 d^2 P(\e,d)^2} > 0,
\end{align}
which completes the proof.
\end{proof}

Thanks to the recurrence relations in \cref{eq:recurrence},
the parameters $p_i$ and $\e_i$ inherit the monotonicity of $P$ and $\D$ from \cref{claim:monotonicity}.  We denote them by $p_i^{(d)}$ and $\e_i^{(d)}$ to explicitly indicate their dependence on $d$.
The following lemma shows that both parameters are strictly monotonic in the dimension $d$.

\begin{lemma}\label{claim:monind}
For any $i \geq 1$,
the success probability $p_i^{(d)}$ is monotonically strictly decreasing in $d$ and
the error parameter $\e_i^{(d)}$ is monotonically strictly increasing in $d$,
for all integer values of $d \geq 2$.
\end{lemma}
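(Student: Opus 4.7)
The plan is to prove this by induction on $i$, using \cref{claim:monotonicity} as the engine. The key observation is that $\e_i^{(d)}$ and $p_i^{(d)}$ depend on $d$ in two ways once $i \geq 1$: directly through the explicit $d$-argument in the recurrence \cref{eq:recurrence,eq:PD}, and indirectly through $\e_{i-1}^{(d)}$. The monotonicity in $d$ established by \cref{claim:monotonicity} handles the direct dependence, and induction handles the indirect dependence. The two effects combine consistently: both make $p_i^{(d)}$ decrease and both make $\e_i^{(d)}$ increase as $d$ grows.

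For the base case $i = 1$, the initial error parameter $\e_0 = \delta$ is independent of $d$, so $p_1^{(d)} = P(\delta, d)$ and $\e_1^{(d)} = \D(\delta, d)$ depend on $d$ only through the explicit argument. The claim then follows immediately from \cref{claim:monotonicity}: $P(\delta, \cdot)$ is strictly decreasing, and $\D(\delta, \cdot)$ is strictly increasing.

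For the inductive step, assume that $\e_{i-1}^{(d)}$ is strictly increasing in $d$. I would then consider two integer dimensions $d < d'$ and argue
\begin{equation*}
  \e_i^{(d)} = \D\of[\big]{\e_{i-1}^{(d)}, d}
  < \D\of[\big]{\e_{i-1}^{(d')}, d}
  < \D\of[\big]{\e_{i-1}^{(d')}, d'}
  = \e_i^{(d')},
\end{equation*}
where the first strict inequality uses strict monotonicity of $\D$ in its first argument together with the inductive hypothesis, and the second uses strict monotonicity of $\D$ in its second argument. The analogous chain with $P$ in place of $\D$ and the inequalities reversed proves that $p_i^{(d)} > p_i^{(d')}$. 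One minor bookkeeping point is that to invoke \cref{claim:monotonicity} at an intermediate argument such as $(\e_{i-1}^{(d')}, d)$, I need $\e_{i-1}^{(d')} \in (0,1)$; this follows by a parallel induction (or just by noting that \cref{cor:better} together with the base case guarantees $\e_i^{(d)} \in (0,\delta) \subset (0,1)$ for every $i \geq 1$).

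There is essentially no hard step here. The main thing to get right is making the composition argument explicit rather than waving at it, since $\D$ and $P$ are not monotone in the same direction, so one has to check that the ``indirect'' $d$-dependence pushes things in the same way as the ``direct'' $d$-dependence in each case. That is exactly what happens: for $\e_i$ both effects increase it, and for $p_i$ both effects decrease it, so no cancellation can occur.
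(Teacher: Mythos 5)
Your proposal is correct and follows essentially the same route as the paper: an induction on $i$ whose base case uses the $d$-independence of $\e_0$ and whose inductive step invokes the strict monotonicity of $\D$ and $P$ in both arguments from \cref{claim:monotonicity}. The only cosmetic difference is that you split the inequality $\D(\e_{i-1}^{(d)}, d) < \D(\e_{i-1}^{(d')}, d')$ into two intermediate steps and explicitly note that the arguments stay in $(0,1)$ (which is already guaranteed by the stated codomain of $\D$ in \cref{eq:PD}), whereas the paper does this in a single step.
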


\begin{proof}
Recall from \cref{eq:recurrence} that $\e_0^{(d)} = \e = \e_0^{(d+1)}$, irrespective of $d$.
Using the recurrence from \cref{eq:recurrence} and the strict monotonicity of $\D$ in $d$ from \cref{claim:monotonicity},
\begin{equation}
  \e_1^{(d)}
  = \D(\e_0^{(d)}, d)
  < \D(\e_0^{(d+1)}, d+1)
  = \e_1^{(d+1)}.
\end{equation}
Using this as a basis for induction,
the strict monotonicity of $\D$ in both arguments implies that
\begin{equation}
  \e_i^{(d)}
  = \D(\e_{i-1}^{(d)}, d)
  < \D(\e_{i-1}^{(d+1)}, d+1)
  = \e_i^{(d+1)}
  \label{eq:emon}
\end{equation}
for all $i \geq 1$.
We can promote this to a similar inequality for $p_i^{(d)}$ using the recurrence from \cref{eq:recurrence}.
Indeed, this recurrence together with the strict monotonicity of $P$ in both arguments implies that
\begin{equation}
  p_i^{(d)}
  = P(\e_{i-1}^{(d)}, d)
  > P(\e_{i-1}^{(d+1)}, d+1)
  = p_i^{(d+1)}
  \label{eq:pmon}
\end{equation}
for all $i \geq 1$.
\end{proof}

\subsection{Asymptotics of the error parameter}\label{sec:asymptotics}

Our goal is to upper bound the error sequence $\e_i^{(d)}$ as a function of
$\e = \e_0^{(d)}$, $d$, and $i$.  We also define 
\begin{equation}
  \e_i^{(\infty)} := \lim_{d \to \infty} \e_i^{(d)}
  \label{eq:einf}
\end{equation}
for all $i \geq 0$.
We can gain some insight and motivate our analysis by examining
numerical plots of $\e_i^{(d)}$ as functions of $i$, for some
representative values of $d$ and a common $\e_0^{(d)}$ that is very
close to $1$, as shown in \cref{fig:numericalplotsfordelta}.
To understand the error sequence with a
smaller value of $\e_0^{(d)}$, one can simply translate the plot horizontally, starting at a value of $i$ corresponding to the desired initial error.

\begin{figure}[H]
\centering
\includegraphics[width = 0.7\textwidth]{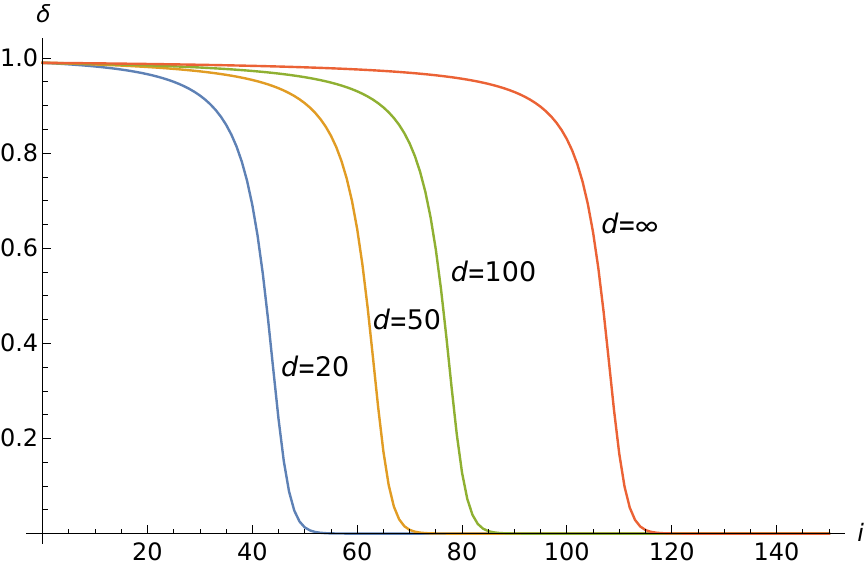}
\caption{\label{fig:numericalplotsfordelta}Numerical plots of $\e_i^{(d)}$
for $d = 20, 50, 100$, and of $\e_i^{(\infty)}$, with initial error $\e_0^{(d)} = 0.99$.} 
\end{figure}

For each $d$, the sequence $\e_{i}^{(d)}$ features a slow decline 
from the initial value $\e \approx 1$, until it drops to about $0.9$.  Then the sequence drops rapidly to about $0.05$ before leveling off and decreasing exponentially. 
We can establish asymptotic bounds by separately considering
the subsequences with $\e_{i}^{(d)} \geq 1/2$ and $\e_{i}^{(d)} < 1/2$.  
However, tighter bounds can be obtained by avoiding the threshold
$\e_{i}^{(d)} \approx 1/2$.  
Thus we consider the subsequences with $\e_{i}^{(d)} \leq 1/3$, 
$1/3 < \e_{i}^{(d)} \leq 2/3$, and $2/3 < \e_{i}^{(d)}$.
Depending on how small $\e$ is, we can combine some or all of the
above estimates to obtain the number of iterations needed to suppress
the error from $\e$ to $\epsilon$ for arbitrary $d$. We then use these results to bound the sample complexity in \cref{sec:complexity}.
Note that the thresholds $1/3$ and $2/3$ are arbitrary, and can be
further optimized as a function of $d$, but we leave such fine tuning to the interested reader.

Our analysis exploits the monotonicity established in \cref{sec:monotonicity}: for a common $\e_0^{(d)}$, for all $i,d$, we have $\e_{i}^{(d)} \leq
\e_{i}^{(d+1)} \leq \e_i^{(\infty)}$, since $\D(\e_{i}^{(d)}, d)$ is an
increasing function of $d$ and of $\e_{i}^{(d)}$.
We divide our analysis into parts corresponding to the three subsequences:
\begin{enumerate}
    \item \label{part:small} We establish a tight upper bound for $\e_i^{(d)}$ for 
    $\e_0^{(d)} = \e \leq 1/3$ for all $d \geq 2$.  This bound is 
    roughly exponentially decreasing in $i$.
    \item \label{part:medium} For any $d \geq 2$, we show that $3$ to $5$ iterations are enough to reduce $\e_i^{(d)}$
    from $2/3$ to $1/3$.
    \item \label{part:large} For $\e_0^{(d)} > 2/3$, and specifically for 
    $\e_0^{(d)} \approx 1$, we are interested in upper bounds on $i^*$ so
    that $\e_{i^*}^{(d)} \approx 2/3$.  We further divide this step into 
    two parts.
    \begin{enumerate}[label=(\alph*),ref=\theenumi(\alph*)]
        \item \label{part:large-a} First, we upper bound 
$i^*$ for $\e_0^{(\infty)}$.  Due to \cref{claim:monind}, the 
same upper bound holds for other $\e_0^{(d)}$.  
Combining parts \ref{part:small}, \ref{part:medium}, and \ref{part:large-a} (in \cref{sec:part1}, \cref{sec:part2}, and \cref{sec:part3a}, respectively), 
we can upper bound $\e_{i}^{(d)}$ for all
$\e$, $i$, and $d$. In particular, we estimate the number of
iterations required (as a function of $\e=\e_0$ and $\epsilon$) to
purify copies of $\rho(\e)$ into $\rho(\epsilon)$.  
The number of iterations is a gentle function of $\e$ and
$\epsilon$, and the large-$d$ limit only introduces an additive constant.  

\item \label{part:large-b} Then we tighten the estimate of $i^*$
for $\e_{i}^{(d)}$ with techniques similar to those in \cref{part:large-a}, but
with more complex arithmetic.  The resulting estimate applies to
all $d \geq 2$ and $\e>2/3$.  It improves upon the bound in
\cref{part:large-a} for small $d (1-\e)$, and approaches the
bound in \cref{part:large-a} for large $d (1-\e)$.

\end{enumerate}
\end{enumerate}

\subsubsection{\Cref{part:small}} 
\label{sec:part1}

If the initial error parameter $\e_0^{(d)}$ is sufficiently small, subsequent values $\e_i^{(d)}$ decrease roughly exponentially in $i$.

\begin{lemma}\label{claim:bounds}
For all $d \geq 2$, if $\e = \e_0^{(d)} \leq 1/2$, then for all $i \geq 0$,
\begin{align}
  \e_i^{(d)} &\leq \frac{\e}{2^i (1 - 2 \e) + 2 \e} .
  \label{eq:012}
\end{align}
\end{lemma}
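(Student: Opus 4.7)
The plan is to reformulate the claimed bound in terms of the reciprocal $x_i := 1/\e_i^{(d)}$. With this change of variables, the inequality $\e_i^{(d)} \leq \frac{\e}{2^i(1-2\e) + 2\e}$ is equivalent to the cleaner linear statement
\[
  x_i - 2 \geq 2^i (x_0 - 2),
\]
which is ideally suited for induction on $i$. The base case $i = 0$ is immediate since $x_0 = 1/\e$, and the inductive step reduces to a single one-step inequality $x_{i+1} - 2 \geq 2 (x_i - 2)$, or equivalently
\[
  \frac{1}{\e_{i+1}^{(d)}} \geq \frac{2}{\e_i^{(d)}} - 2.
\]

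The core of the argument is then a direct algebraic verification of this one-step inequality, using the explicit recurrence $\e_{i+1}^{(d)} = \D(\e_i^{(d)}, d) = \frac{\e_i^{(d)} + (\e_i^{(d)})^2/d}{2 P(\e_i^{(d)}, d)}$ from \cref{eq:PD}. Substituting this formula on the left and clearing denominators (all strictly positive for $\e_i^{(d)} \in (0,1)$), the inequality reduces to
\[
  2 P(\e_i^{(d)}, d) \geq \bigl(2 - 2 \e_i^{(d)}\bigr)\bigl(1 + \e_i^{(d)}/d\bigr).
\]
Expanding both sides as polynomials in $\e_i^{(d)}$ and simplifying, the difference works out to $(\e_i^{(d)})^2 (1 + 1/d)$, which is strictly positive. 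This is the one (short) calculation required.

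Notice that the hypothesis $\e \leq 1/2$ plays no direct role in the algebraic step above; its only purpose is to guarantee that the quantity $2^i(1 - 2\e) + 2\e$ appearing in the denominator of the stated bound remains positive for every $i \geq 0$, so that the upper bound is a meaningful (positive) number rather than a vacuous one. I do not anticipate any substantial obstacle: the reciprocal substitution converts a somewhat opaque rational inequality in $\e_i^{(d)}$ into a transparent linear recurrence $x_{i+1} \geq 2 x_i - 2$ on the reciprocals, whose verification collapses to a two-line polynomial identity. The only aspect requiring even mild care is the bookkeeping of which quantities are positive when clearing denominators, but this is handled automatically by the fact that $\e_i^{(d)} \in (0,1)$ throughout.
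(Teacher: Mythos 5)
Your proof is correct. The key identity checks out: with $P(\e,d) = 1 - (1-\tfrac{1}{d})\e + \tfrac12(1-\tfrac{1}{d})\e^2$, one indeed finds $2P(\e,d) - (2-2\e)(1+\e/d) = \e^2(1+\tfrac{1}{d}) > 0$, so the one-step reciprocal inequality $\frac{1}{\e_{i+1}^{(d)}} \geq \frac{2}{\e_i^{(d)}} - 2$ holds for every finite $d \geq 2$, and the linear recurrence on reciprocals then telescopes to the stated closed form. Your observation about the role of $\e \leq 1/2$ (ensuring the denominator $2^i(1-2\e)+2\e$ is positive so the reciprocal inequality can be inverted) is also the right bookkeeping point. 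The paper reaches the same closed form by a slightly different route: it first passes to the $d \to \infty$ limit using the monotonicity of $\e_i^{(d)}$ in $d$ (its Lemma on monotonicity in the dimension), then majorizes $\e_i^{(\infty)}$ by an auxiliary sequence $\eta_i = \frac{\eta_{i-1}}{2-2\eta_{i-1}}$ via the bound $\frac{\e}{2-2\e+\e^2} \leq \frac{\e}{2-2\e}$, and finally solves the exact recurrence $\frac{1}{\eta_i} = \frac{2}{\eta_{i-1}} - 2$. Both arguments ultimately rest on the same linear recurrence for the reciprocals, but yours is more self-contained: it works directly at finite $d$ and does not need the monotonicity-in-$d$ lemma, at the cost of one short polynomial expansion involving $P$ and $\D$. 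The paper's detour through $d=\infty$ buys uniformity across all dimensions essentially for free once that monotonicity lemma is in place (and that lemma is reused elsewhere in the analysis), whereas your computation makes the lemma stand on its own.
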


\begin{proof}
We upper bound $\e_i^{(d)}$ by another sequence $\eta_i$ defined as follows:
\begin{align}
  \eta_{0} := \delta, \quad
  \eta_{i} := \frac{\eta_{i-1}}{2 - 2 \eta_{i-1}} \quad {\rm for}~i \geq 1.
  \label{eq:eta}
\end{align}
To see that $\e_i^{(d)} \leq \eta_i$ for all $i \geq 0$, suppose $\e =
\e_0^{(d)} = \e_0^{(\infty)} = \eta_{0}$.
From \cref{claim:monind}, $\e_i^{(d)} \leq \e_i^{(\infty)}$ for all $i
\geq 0$.  Let us now prove by induction that $\e_i^{(\infty)} \leq
\eta_i$ for all $i \geq 0$.  The base case is immediate.  The induction hypothesis says
$\e_{r}^{(\infty)} \leq \eta_{r}$ for some $r\geq 0$, so from \cref{eq:recurrence}, we have 
\begin{align}
    \e_{r+1}^{(\infty)}
    & = \D(\e_{r}^{(\infty)}, \infty)
    \leq \D(\eta_{r}, \infty)
\\
    & = \frac{\eta_{r}}{2 - 2 \eta_{r} + \eta_{r}^2}
    \leq \frac{\eta_{r}}{2 - 2 \eta_{r}} = \eta_{r+1}
\end{align}
where the second line comes from \cref{eq:e}.
Altogether, this shows $\e_i^{(d)} \leq \eta_i$ for all $i \geq 0$.  

The recurrence $\eta_i$ admits an exact solution.
Indeed, note that
\begin{equation}
  \frac{1}{\eta_i} = \frac{2}{\eta_{i-1}} - 2,
\end{equation}
which yields the following closed-form expression for $\eta_i$:
\begin{equation}
  \frac{1}{\eta_i} = 2^i \of*{\frac{1}{\eta_0} - 2} + 2.
\end{equation}
Inverting and substituting $\eta_0 = \e$, we find that
\begin{equation}
  \eta_i = \frac{\e}{2^i (1 - 2 \e) + 2 \e}.
  \label{eq:etai}
\end{equation}
and the result follows.
\end{proof}

Note that under the assumption $\e \leq 1/2$, we have $0 \leq \eta_i \leq 1$ for all $i \geq 0$.

To suppress the error from $\e \in (0,1/2)$ to
$\epsilon$, since $\eta_i < \frac{\e}{2^i (1 - 2 \e)}$, it suffices to
iterate $\log_2 \frac{\e}{(1-2\e) \epsilon}$ times.  In particular, for
$\e \leq 1/3$, it suffices to iterate $\log_2 \frac{1}{\epsilon}$
times.

\subsubsection{\Cref{part:medium}}
\label{sec:part2}

Using monotonicity from \cref{claim:monind} and by considering $\e^{(\infty)}_i$, a brute-force calculation shows that $5$ iterations are sufficient to suppress the error from at most $2/3$ to at most $1/3$ for all $d \geq 2$.  For smaller values of $d$, as few as $3$ iterations suffice.

\subsubsection{\Cref{part:large-a}}
\label{sec:part3a}

For $\e \approx 1$, \cref{fig:numericalplotsfordelta} suggests that the error suppression is slow until after about $\frac{1}{1-\e}$ iterations, when the error rapidly drops from $0.9$ to $0.1$.  We now
study how quickly $\e_i^{(d)}$ can be upper bounded by $2/3$ starting
from an arbitrary $\e_0^{(d)} \approx 1$.

We first consider the simpler sequence  $\e_i^{(\infty)}$.
Note from \cref{eq:PD} that
\begin{equation}
    \lim_{d \to \infty} \D(x,d) = \frac{x}{2 - 2x + x^2},
\end{equation}
so $\e_i^{(\infty)}$ obeys the following recurrence:
\begin{align}
  \e_0^{(\infty)} := \delta, \quad
  \e_i^{(\infty)} := \frac{\e_{i-1}^{(\infty)}}{ 2 - 2 \e_{i-1}^{(\infty)} + \e_{i-1}^{(\infty)2}} \quad {\rm for}~i \geq 1.
\end{align}
Because the function $(2-2x+x^2)^{-1}$ is monotonically increasing for $x \in [0,1]$, and $\e_i^{(\infty)}$ is monotonically decreasing in $i$, the following upper bound holds by induction:
\begin{equation}
  \e_i^{(\infty)}
  \leq \frac{\e_{i-1}^{(\infty)}}{2 - 2 \e + \e^2}
  \leq \frac{\e}{ (2 - 2 \e + \e^2)^i} .
\label{eq:older-delta-bound} 
\end{equation}
If $\e > 2/3$ is constant, 
the above is an exponentially decreasing function in $i$, which may be 
sufficient for many purposes.  
However, this bound does not have tight dependence on $\e$.  

We can instead bound $\e_i^{(\infty)}$ by considering two other recurrences, defined as follows.  
Let $i^*$ be the smallest $i$ so that $\e_{i^*+1}^{(\infty)} < 2/3$. 
We first define 
\begin{equation}
\kappa_i^{(\infty)} = 1-\e_i^{(\infty)},
\end{equation}
so that
\begin{equation}
\label{eq:kappai_expression}
\kappa_0^{(\infty)} 
= 1-\e, \quad
\kappa_i^{(\infty)} = 1 - \e_i^{(\infty)} 
= 1 - \frac{\e_{i-1}^{(\infty)}}{ 2 - 2 \e_{i-1}^{(\infty)} + \e_{i-1}^{(\infty)2}}
= \frac{\kappa_{i-1}^{(\infty)} + \kappa_{i-1}^{(\infty)2}}{1+ \kappa_{i-1}^{(\infty)2}}.
\end{equation}
Let
\begin{equation}
g^{(\infty)}(x) = \frac{x+x^2}{1+x^2},
\label{eq:ginfinity}
\end{equation}
so $\kappa_i^{(\infty)} = g^{(\infty)}(\kappa_{i-1}^{(\infty)})$.  
Furthermore, for $i \in \{0,1,\ldots, i^*\}$, define an inverse recurrence for $\kappa_i^{(\infty)}$ as 
\begin{equation}
\mu_0^{(\infty)} 
= \kappa_{i^*}^{(\infty)}, \quad
\mu_i^{(\infty)} = h^{(\infty)}(\mu_{i-1}^{(\infty)})
\end{equation}
where $h^{(\infty)}(y)$ is the inverse of $y = g^{(\infty)}(x) = \frac{x+x^2}{1+x^2}$ (which is well-defined for all $x \in [0,1]$), 
so indeed, $\mu_i^{(\infty)} = \kappa_{i^*-i}^{(\infty)}$.
In particular, $\mu_{i^*}^{(\infty)} = 1-\e$.  
The definition of $i^*$ and the relationships between 
$\e_i^{(d)}, \kappa_i^{(d)}, \mu_i^{(d)}$ for a representative
$d=20$ are summarized in \Cref{fig:kappa-mu}.

\begin{figure}[ht]
\vspace{3ex}
\begin{center}
  \includegraphics[width = .8\textwidth]{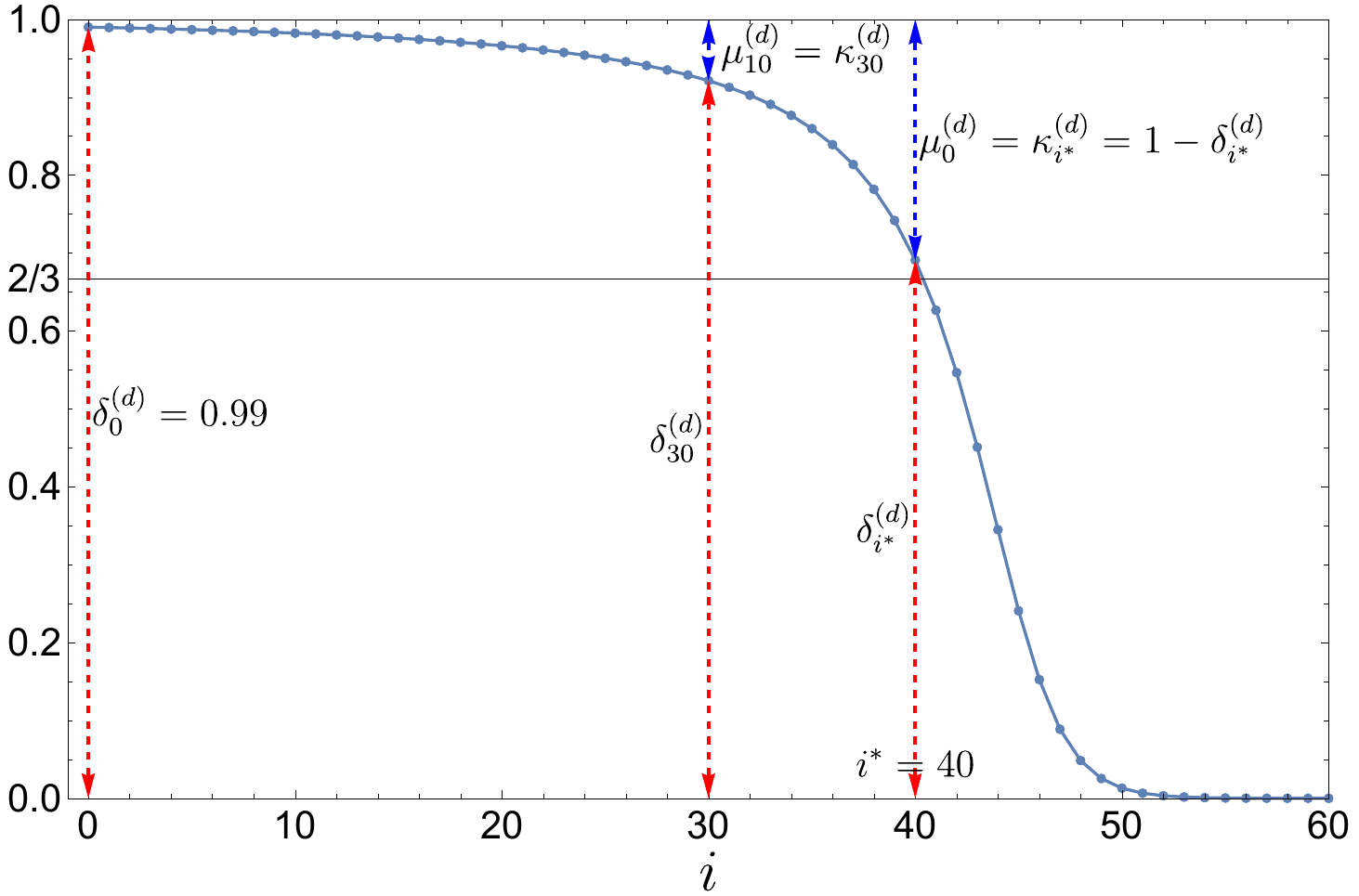}
\end{center}
\caption{\label{fig:kappa-mu} A diagrammatic summary of the definition of $i^*$, 
and the relationships between $\e_i^{(d)}, \kappa_i^{(d)}$, and $\mu_{i^*-i}^{(d)}$, illustrated for $d=20$ and $\e=0.99$. 
The light blue curve shows $\e_i^{(d)}$ as a function of $i$.
Since $i^*$ is the smallest $i$ so that $\e_{i+1}^{\infty} < 2/3$, in this case, $i^*=40$.
For each $i$, $\kappa_i^{(d)} = 1- \e_i^{(d)}$ is the distance from the 
light blue curve to the value $1$.  Finally, $\mu_{i}^{(d)} = \kappa_{i^*-i}$ 
is a recurrence that iterates from the right to the left (starting from 
$i^*$).  Note that the 
index $j$ for $\mu_j^{(d)}$ is related to the label $i$ for the abscissa as $j+i = i^*$.} 
\end{figure}

\clearpage

\begin{figure}[ht]
\begin{center}
\includegraphics[width = .8\textwidth]{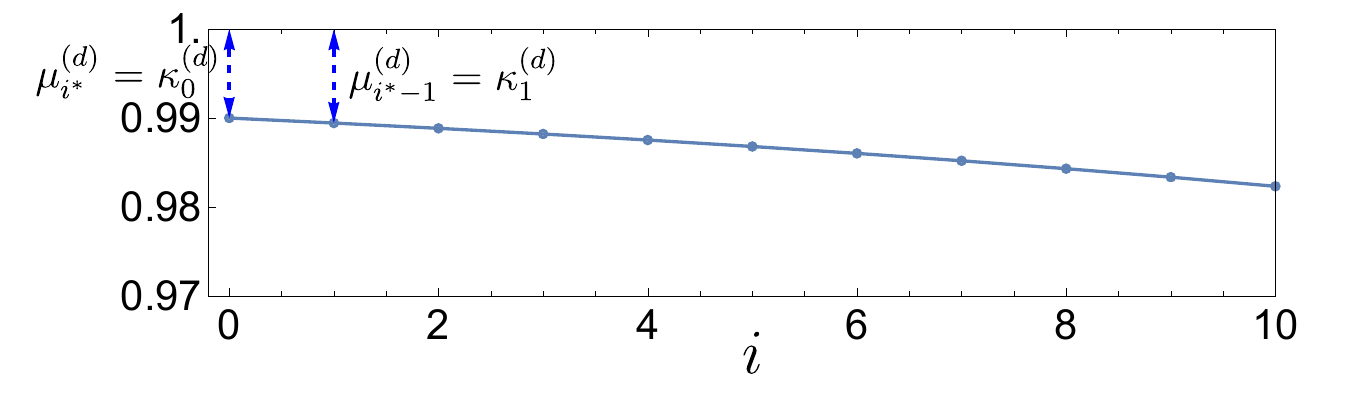}
\end{center}
\caption{
Zoomed-in view of the upper left corner of \Cref{fig:kappa-mu}.}
\end{figure}


The main idea of the analysis is as follows.  
We will derive upper bounds for $\mu_i^{(\infty)}$ as a function of $i$. 
In particular, 
if we prove an upper bound $\upsilon$ for $\mu_{i}^{(\infty)}$,
we have a lower bound $1-\upsilon$ for $\e_{i^*-i}^{(\infty)}$, which means
that $i+1$ iterations are sufficient to reduce 
$\e_{i^*-i}^{(\infty)}$ from at least $1-\upsilon$ to
$\e_{i^*+1}^{(\infty)} < 2/3$.  
In particular, if $n$ satisfies $\mu_n^{(\infty)} \leq 1-\e$, then
$n+1$ iterations are sufficient to reduce the initial error
$\e_{0}^{(\infty)} = \e$ to $\e_{n+1}^{(\infty)} < 2/3$, so $i^*<n$.
In particular, we show the following.\footnote{In version 1 of the arXiv preprint of this
manuscript, we invoked a theorem from \cite{Stevic96} (case 2), which
implies
\begin{equation*}
    \mu_i^{(\infty)} = \frac{1}{i} + \frac{2 \ln i}{i^2} + o \left( \frac{\ln i}{i^2} \right), 
\end{equation*}
and provided a self-contained derivation using Taylor series expansion
in $\mu_i^{(\infty)}$.  The strict bound in
\cref{claim:mu-inf-bound} is obtained by different techniques, 
most crucially by strengthening Eq.~(54) in version 1 to 
the inequality in \cref{claim2}, and by replacing the approximation $\mu_i^{(\infty)} \approx \frac{1}{i}$ with recursively better upper bounds in \cref{claim3,claim4,claim5}, using \cref{claim2}.\label{footnote:v1comparison}}

\begin{lemma}\label{claim:mu-inf-bound}
If $\mu_0^{(\infty)} \leq 1/3$, then for all $i \geq 1$, 
\begin{align}
\mu_i^{(\infty)} < \frac{1}{i} + \frac{2 \ln i}{i^2}.
\end{align}
\end{lemma}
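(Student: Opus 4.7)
The plan is to pass to the reciprocal $u_i := 1/\mu_i^{(\infty)}$ and analyze its increments, which turn out to satisfy a much cleaner recurrence than $\mu_i^{(\infty)}$ itself. Rearranging $\mu_{i-1}^{(\infty)} = (\mu_i^{(\infty)} + \mu_i^{(\infty)2})/(1+\mu_i^{(\infty)2})$ from \cref{eq:kappai_expression} yields the exact identity
\begin{equation}
  u_i - u_{i-1} = \frac{1-\mu_i^{(\infty)}}{1+\mu_i^{(\infty)}} = 1 - \frac{2\mu_i^{(\infty)}}{1+\mu_i^{(\infty)}},
\end{equation}
which is strictly positive on $(0,1)$, so $u_i$ is strictly increasing and $\mu_i^{(\infty)}$ is strictly decreasing from $\mu_0^{(\infty)} \leq 1/3$. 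Combining the elementary estimate $(1-\mu)/(1+\mu) \geq 1 - 2\mu$ with $u_0 \geq 3$ telescopes to the workhorse inequality
\begin{equation}
  u_i \geq 3 + i - 2\sum_{j=1}^{i} \mu_j^{(\infty)}.
\end{equation}

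The heart of the argument is a bootstrap where successively sharper upper bounds on $\mu_j^{(\infty)}$ are substituted into the sum to yield successively sharper lower bounds on $u_i$. In a first round, the trivial bound $\mu_j^{(\infty)} \leq 1/3$ already gives $u_i \geq 3 + i/3$, hence $\mu_i^{(\infty)} \leq 3/(i+9)$. In a second round, feeding $\mu_j^{(\infty)} \leq 3/(j+9)$ into the sum together with $\sum_{j=1}^i 3/(j+9) \leq 3\ln((i+9)/10)$ produces $u_i \geq i - 6\ln(i+9) + O(1)$. A final round plugs in the resulting $\mu_j^{(\infty)} \lesssim 1/(j - 6\ln j + O(1))$, whose partial sums differ from those of $1/j$ by only a bounded amount, to yield $u_i \geq i - 2\ln i + O(1)$. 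Inverting this and rearranging algebraically shows $\mu_i^{(\infty)} < 1/i + 2\ln i/i^2$ for all $i$ larger than some explicit absolute constant $i_0$.

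For the finitely many remaining small indices $1 \leq i \leq i_0$, the bound is verified by direct inspection. Since the right-hand side $1/i + 2\ln i/i^2$ already exceeds $1/3$ for $i \leq 5$, the strict monotonicity bound $\mu_i^{(\infty)} \leq \mu_0^{(\infty)} \leq 1/3$ handles those indices immediately, and the handful of intermediate values can be dispatched by running the recurrence explicitly.

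The main obstacle is the constant-tracking in the bootstrap: one must ensure that the coefficient of $\ln i$ in the final estimate $u_i \geq i - 2\ln i + O(1)$ is exactly $2$, not $2(1+\varepsilon)$, since this coefficient is what translates into the constant $2$ in the claimed bound. Each round of refinement feeds a bound of the shape $\mu_j \lesssim 1/(j - c\ln j + \cdots)$ into the sum, and one must verify, for instance by comparison with $\int dx/(x - c\ln x)$, that the logarithmic correction in the denominator perturbs the partial sum only by $O(1)$ without inflating the leading $\ln i$ coefficient. This delicate accounting is what forces multiple rounds of bootstrapping and is what the footnote alludes to when the authors describe their Claims~3--5 as successively refining the approximation $\mu_i^{(\infty)} \approx 1/i$ on the basis of their strengthened inequality from Claim~2.
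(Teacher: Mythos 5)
Your proposal is correct and follows essentially the same route as the paper: a lower bound of the form $1-2\mu$ on the increments of $1/\mu_i^{(\infty)}$ (your exact identity $1/\mu_i^{(\infty)} - 1/\mu_{i-1}^{(\infty)} = (1-\mu_i^{(\infty)})/(1+\mu_i^{(\infty)})$ is a mild streamlining of \cref{claim2}, which it implies since $\mu_i^{(\infty)} < \mu_{i-1}^{(\infty)}$), followed by the same three-round bootstrap --- first $O(1/i)$, then $1/i + O(\ln i/i^2)$, then the sharp constant $2$ --- as in \cref{claim3,claim4,claim5}, with small indices checked directly. The only slip is the remark that $1/i + 2\ln i/i^2$ exceeds $1/3$ for all $i \leq 5$ (at $i=5$ it is $\approx 0.329 < 1/3$), but this is harmless since you also propose running the recurrence explicitly on the remaining small indices, exactly as the paper does via \cref{claim1}.
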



\begin{proof}
The proof consists of several claims.  In \cref{claim2}, we obtain an
inequality which can be used to upper bound $\mu_i^{(\infty)}$ 
using an upper bound for $\mu_{i-1}^{(\infty)}$.  Then, we start with a crude upper bound on all $\mu_k^{(\infty)}$ in \cref{claim1}, and recursively improve the upper bound in \cref{claim3,claim4,claim5}, using \cref{claim2}.

By taking the derivative of $g^{(\infty)}(x)$ defined
in \cref{eq:ginfinity}, we can verify that
$g^{(\infty)}(x)$ is \emph{strictly} increasing, continuous,
and bounded on $[0,2.4]$.  
Therefore, its inverse $h^{(\infty)}(y)$ has the same properties
on $[0,1.2]$. 
We can also obtain a closed-form expression for $h^{(\infty)}(y)$
by inverting $y = \frac{x+x^2}{1+x^2}$ 
using the quadratic formula, 
\begin{equation} 
x = \frac{-1 + \sqrt{1 + 4 y (1-y)}}{2 (1-y)}, 
\end{equation}
keeping only the solution corresponding to 
the case of interest, $0 < x,y < 1/3$
(because $1/3 > \mu_0^{(\infty)}$ and $\mu_i^{(\infty)}$ 
decreases with $i$ while remaining positive). 

\begin{claim}\label{claim1}
We have
$\mu_1^{(\infty)} \leq 0.2808$,
$\mu_2^{(\infty)} \leq 0.2396$, and $\mu_i^{(\infty)} \leq 1/10$ for all $i \ge 10$. 
\end{claim}

\begin{claimproof}[Proof of \cref{claim1}]
This follows from the monotonicity of $h^{(\infty)}$
and its closed-form expression,  
using $\mu_0^{(\infty)} \leq 1/3$. \vspace*{3ex}
\end{claimproof}

\begin{claim}\label{claim2}
For all $i \geq 3$,  
\begin{equation}
\frac{1}{\mu_i^{(\infty)}} > \frac{1}{\mu_{i-1}^{(\infty)}} + 1 - 2 \mu_{i-1}^{(\infty)} .
\label{eq:claim1}
\end{equation}
\end{claim}

\begin{claimproof}[Proof of \cref{claim2}]
We can rephrase the claim as 
\begin{equation}
\mu_i^{(\infty)} = h^{(\infty)}(\mu_{i-1}^{(\infty)}) < \frac{1}{\frac{1}{\mu_{i-1}^{(\infty)}} + 1 - 2 \mu_{i-1}^{(\infty)}} . 
\end{equation}
For $i \geq 3$, $\frac{1}{\frac{1}{\mu_{i-1}^{(\infty)}} + 1 - 2 \mu_{i-1}^{(\infty)}} < 0.22$, so
$g^{(\infty)}$ is invertible. Inserting the identity function $h^{(\infty)} \circ g^{(\infty)}$ on the right-hand side
and using the monotonicity of $h^{(\infty)}$, it suffices to show
\begin{equation}
\mu_{i-1}^{(\infty)} < g^{(\infty)} \left( \frac{1}{\frac{1}{\mu_{i-1}^{(\infty)}} + 1 - 2 \mu_{i-1}^{(\infty)}} \right).
\label{eq:newcondclaim2}
\end{equation}
The above steps translate the analysis from $h^{(\infty)}$ (involving
a square root) to $g^{(\infty)}$ (which involves simpler rational expressions and squares).  

For $x>0$, $g^{(\infty)}(x) = \frac{x+x^2}{1+x^2}
=\frac{1+\frac{1}{x}}{1+\frac{1}{x^2}}$, so \cref{eq:newcondclaim2}
is equivalent to
\begin{equation}
 0 <
 \frac
     {1 + \left( \frac{1}{\mu_{i-1}^{(\infty)}} + 1 - 2 \mu_{i-1}^{(\infty)} \right)^{~}}
     {1 + \left( \frac{1}{\mu_{i-1}^{(\infty)}} + 1 - 2 \mu_{i-1}^{(\infty)} \right)^2}
 - \mu_{i-1}^{(\infty)},
\end{equation}
or equivalently
\begin{align}
  0 < 
  \left( \frac{1}{\mu_{i-1}^{(\infty)}} + 2 - 2 \mu_{i-1}^{(\infty)} \right)
  - 
  \mu_{i-1}^{(\infty)} - \mu_{i-1}^{(\infty)}
  \left( \frac{1}{\mu_{i-1}^{(\infty)}} + 1 - 2 \mu_{i-1}^{(\infty)} \right)^2 
  = 4 \mu_{i-1}^{(\infty)2} \left( 1- \mu_{i-1}^{(\infty)}\right),
\end{align}
which clearly holds.
\end{claimproof}

Note that \cref{claim2} shows an upper bound on $\mu_{i}^{(\infty)}$ using any upper bound on $\mu_{i-1}^{(\infty)}$.  We use this strategy to recursively establish tight bounds in the following claims.

\begin{claim}\label{claim3}
For all $i \geq 3$,  
\begin{equation}
 \frac{1}{\mu_i^{(\infty)}} > \frac{4}{5} \left( i + \frac{5}{2} \right) .
\label{eq:claim3}
\end{equation}
\end{claim}

\begin{claimproof}[Proof of \cref{claim3}]
The claim can be directly verified 
for $1 \leq i \leq 10$ using the upper bound $\mu_0^{(\infty)} \leq 1/3$ and 
the closed-form expression for $h^{(\infty)}$.  For $i \geq 11$, using 
\cref{claim1}, $1- 2 \mu_{i-1}^{(\infty)} \geq \frac{4}{5}$.  Substituting this
into \cref{claim2} gives
\begin{equation}
 \frac{1}{\mu_i^{(\infty)}} 
> \frac{1}{\mu_{i{-}1}^{(\infty)}} + \frac{4}{5}
> \frac{1}{\mu_{i{-}2}^{(\infty)}} + 2 \times \frac{4}{5}
> \cdots 
> \frac{1}{\mu_{10}^{(\infty)}} + (i-10) \times \frac{4}{5}
> \frac{4}{5} \left( i + \frac{5}{2} \right),
\end{equation}
where we have also used \cref{claim1} for $\mu_{10}^{(\infty)}$ in the last step. 
\end{claimproof}

\Cref{claim3} improves the bound in \cref{claim1}. We can again 
use \cref{claim2}, now together with \cref{claim3}, to get another improved bound.  

\begin{claim}\label{claim4} 
For all $i \geq 1$,  
\begin{equation}
 \mu_i^{(\infty)} < \frac{1}{i} + \frac{5}{2} \frac{\ln i}{i^2} .
\label{eq:claim4}
\end{equation}
\end{claim}

\begin{claimproof}[Proof of \cref{claim4}]
The $i=1,2$ cases follow from \cref{claim1}.  
Using \cref{claim2,claim3}, for all $i \geq 3$,   
\begin{equation}
\frac{1}{\mu_i^{(\infty)}} 
> \frac{1}{\mu_{i-1}^{(\infty)}} + 1 - 2 \mu_{i-1}^{(\infty)} 
> \frac{1}{\mu_{i-1}^{(\infty)}} + 1 - \frac{5}{2} \frac{1}{\left( i + \frac{5}{2} \right) }
> \frac{1}{\mu_{2}^{(\infty)}} + (i-2) - \frac{5}{2} \sum_{k=2}^{i-1} 
                                     \frac{1}{\left( k + \frac{5}{2} \right) } .
\label{eq:claim4step1}
\end{equation}
Note that 
\begin{equation}
\frac{1}{x + \frac{1}{2}} \leq \ln(x+1) - \ln(x) \text{~for $x \geq 1$},
\label{eq:lnbound}
\end{equation}
so
\begin{align}
\frac{1}{\mu_i^{(\infty)}} 
& > \frac{1}{\mu_{2}^{(\infty)}} + (i-2) - \frac{5}{2} \sum_{k=2}^{i-1} 
                                     ~ \left[ \ln(k+3) - \ln(k+2) \right] 
\\
& = \frac{1}{\mu_{2}^{(\infty)}} + (i-2) - \frac{5}{2} 
                                     ~ \left[ \ln(i+2) - \ln 4 \right] 
\\
& > i - \frac{5}{2} \ln(i+2) + 5.6408 ,
\\
& >i-\frac{5}{2}\ln i+4, 
\label{eq:claim4step2}
\end{align}
where we use the numerical bound of $\mu_{2}^{(\infty)}$ and 
$\ln(i+2)-\ln i=\ln(1+\frac{2}{i})\leq \ln\frac{5}{3}$ in 
the last two steps.  
It remains to show
\begin{equation}
    \frac{1}{  i - \frac{5}{2} \ln i + 4}< \frac{1}{i} +  \frac{5}{2} \frac{\ln i}{i^2},
\end{equation}
which is equivalent to 
\begin{equation}
    16 i + 40\ln i> 25\ln^2i.
\end{equation}
This holds since $16 x> 25\ln^2x$ for all $x\geq 1$. Indeed, the function $4\sqrt{x}-5\ln x$ is minimized at $x=\frac{25}{4}$ where it is still positive.
\end{claimproof}

We provide the final tightening of 
the bound that recovers the asymptotic result mentioned in \cref{footnote:v1comparison} and completes the proof of \cref{claim:mu-inf-bound}.

\begin{claim}\label{claim5}
For all $i \geq 1$,  
\begin{equation}
 \mu_i^{(\infty)} < \frac{1}{i} + \frac{2\ln i}{i^2} .
\label{eq:claim5'}
\end{equation}
\end{claim}

\begin{claimproof}[Proof of \cref{claim5}]
The case of $i\leq 7$ can be checked directly. For all $i\geq 8$, \cref{claim2,claim4} imply
\vspace*{-2ex}
\begin{align}
\frac{1}{\mu_i^{(\infty)}} 
& > \frac{1}{\mu_{i-1}^{(\infty)}} + 1 - 2 \mu_{i-1}^{(\infty)} 
> \frac{1}{\mu_{i-1}^{(\infty)}} + 1 - \frac{2}{i-1} -  \frac{5 \ln (i-1)}{(i-1)^2} 
\\
& >  \frac{1}{\mu_{7}^{(\infty)}} + (i-7) - \left( \sum_{k=7}^{i-1} \frac{2}{k} + \frac{5 \ln k}{k^2} \right)  
\\
& >  \frac{1}{\mu_{7}^{(\infty)}} + (i-7) + 2 \ln(6.5) - 2 \ln(i-0.5) - \frac{5}{6}(1+\ln 6)
\\
& >  i - 2 \ln(i) +2,
\end{align}
where we have again used the inequality 
$\frac{1}{\left( x + \frac{1}{2} \right)} \leq \ln(x+1) - \ln(x)$, 
and also the bound $\sum_{k=7}^\infty\frac{\ln k}{k^2}<\int_6^{\infty} \frac{\ln x}{x^2} dx = (1+\ln 6)/6$.  
It remains to show that
\begin{align}
\frac{1}{  i -2\ln(i) + 2}
< \frac{1}{i} +  \frac{2\ln i}{i^2}. 
\end{align}
This is equivalent to $i+2\ln(i)>2\ln^2(i)$, which indeed holds for all $i\geq 1$ as one can check by taking derivatives.
\end{claimproof}

This completes the proof of the lemma.
\end{proof}

We can obtain an upper bound on the number of iterations $n$ that is 
sufficient to suppress any initial noise $\e \in (2/3,1)$ to 
less than $1/3$.  
\begin{lemma}\label{claim:n-iter-bound}
For any $\e \in (2/3,1)$, if an integer $n$ satisfies 
\begin{equation}
\label{eq:part3a}
n \geq \frac{1}{1-\e} \left(1 + 2 (1-\e) \ln \left( \frac{1}{1-\e} \right)  \right)
= \frac{1}{1-\e} + 2 \ln \left( \frac{1}{1-\e} \right),
\end{equation} 
then
$\mu_{n}^{(\infty)} \leq 1-\e$, and thus $\e_{n+1}^{(\infty)} < 2/3$.
\end{lemma}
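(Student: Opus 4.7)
The plan is to invoke the asymptotic bound $\mu_n^{(\infty)} < \tfrac{1}{n} + \tfrac{2 \ln n}{n^2}$ from \cref{claim:mu-inf-bound} and then verify directly that the stated lower bound on $n$ forces this quantity to be at most $1-\e$. First I would observe that the hypothesis $\mu_0^{(\infty)} \leq 1/3$ of \cref{claim:mu-inf-bound} is automatic: by the definition of $i^*$ as the \emph{smallest} index with $\e_{i^*+1}^{(\infty)} < 2/3$, we have $\e_{i^*}^{(\infty)} \geq 2/3$, hence $\mu_0^{(\infty)} = \kappa_{i^*}^{(\infty)} = 1 - \e_{i^*}^{(\infty)} \leq 1/3$.

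Set $\alpha := 1-\e$, $x := 1/\alpha > 3$, and $n_0 := x + 2\ln x$, so the stated bound reads $n \geq n_0$. The key step is to show $f(n_0) \leq \alpha$, where $f(n) := \tfrac{1}{n} + \tfrac{2\ln n}{n^2}$. Multiplying $f(n_0) \leq 1/x$ through by $x n_0^2$ and substituting $n_0 = x + 2\ln x$ on both sides, this reduces after cancellation to
\[
 x \ln\!\bigl(1 + \tfrac{2\ln x}{x}\bigr) \leq 2 \ln^2 x.
\]
The left-hand side is at most $2\ln x$ by $\ln(1+u) \leq u$, and since $\e > 2/3$ gives $x > 3 > e$ we have $\ln x > 1$, so $2\ln x \leq 2\ln^2 x$, as required.

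To extend this bound from $n = n_0$ to every $n \geq n_0$, I would verify that $f$ is decreasing on $[n_0, \infty)$: a routine derivative calculation reduces $f'(n) \leq 0$ to the condition $n + 4 \ln n \geq 2$, which holds for all $n \geq 2$, and in particular for $n \geq n_0 > 3$. Combined with \cref{claim:mu-inf-bound}, this yields $\mu_n^{(\infty)} < f(n) \leq f(n_0) \leq 1 - \e$. Since $\mu^{(\infty)}$ is strictly decreasing in its index and satisfies $\mu_{i^*}^{(\infty)} = \kappa_0^{(\infty)} = 1-\e$, this forces $n > i^*$, so $\e_{n+1}^{(\infty)} \leq \e_{i^*+1}^{(\infty)} < 2/3$. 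The main subtlety is the algebraic reduction in the middle paragraph: the constants $2$ appearing in both \cref{claim:mu-inf-bound} and the definition of $n_0$ are calibrated so that the elementary estimates $\ln(1+u) \leq u$ and $\ln x \geq 1$ suffice to close the inequality without any slack.
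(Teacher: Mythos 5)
Your proof is correct and follows essentially the same route as the paper: both reduce the lemma to \cref{claim:mu-inf-bound} plus an elementary verification that the stated lower bound on $n$ forces $\frac{1}{n} + \frac{2\ln n}{n^2} \leq 1-\e$. The paper closes this in one line by bounding $\frac{\ln n}{n} \leq (1-\e)\ln\frac{1}{1-\e}$ via the monotonicity of $\ln t/t$ for $t \geq e$, whereas you evaluate at the endpoint $n_0 = \frac{1}{1-\e} + 2\ln\frac{1}{1-\e}$ and separately check that $f(n) = \frac{1}{n} + \frac{2\ln n}{n^2}$ is decreasing; the substance is the same.
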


\begin{proof}
Using \cref{claim:mu-inf-bound}, 
\begin{equation}
 \mu_n^{(\infty)} < \frac{1}{n} \left( 1 + \frac{2 \ln n}{n} \right) 
< \frac{1}{n} \left( 1 + 2 (1-\e) \ln \frac{1}{1-\e} \right) \leq 1-\e,
\end{equation}
where the second inequality holds because $n > \frac{1}{1-\e}$ 
and $\frac{\ln x}{x}$ 
is decreasing for $x\geq 3$.  
\end{proof}

As a comparison, if we use the rough upper bound 
for $\e_i^{(\infty)}$ in \cref{eq:older-delta-bound}, we conclude that the upper bound is less than 
$2/3$ after $n$ iterations if  
\begin{equation}
  n \geq \ceiling*{\frac{\ln (3\e/2) }{\ln (2-2\e+\e^2)}}.
  \label{eq:n'}
\end{equation}
When $1-\e \ll 1$, 
$\ln (2-2\e+\e^2) = \ln (1+(1-\e)^2) \approx (1-\e)^2$, so the upper bound on the number of iterations is roughly 
\begin{equation}
n \geq \ceiling*{\frac{\ln (3\e/2) }{(1-\e)^2}},
\end{equation}
with quadratically worse dependence on $\frac{1}{1-\e}$.  

Combining \cref{part:small,part:medium,part:large-a}, we have established the following estimate.
For all $d \geq 2$, $\e >
2/3$, and $\epsilon$ with $\e > \epsilon > 0$, the error can be reduced from $\e$ to $\epsilon$ with at most
\begin{equation}
    \log_2 \frac{1}{\epsilon} + 5 + \frac{1}{1-\e} + 2 \ln \left( \frac{1}{1-\e} \right)
\end{equation}
iterations.

\subsubsection{\Cref{part:large-b}}
\label{sec:part3b}

We now provide a more refined analysis\footnote{In the current analysis, \cref{claim2d} is the finite-$d$ analogue
of \cref{claim2} in \Cref{part:large-a}.  One can show that \Cref{claim2d} has the same
coefficients as in eq.~(74) of Part 3(b) in version 1 of the arXiv preprint of this manuscript.
Whereas the latter is an
asymptotic expression, \cref{claim2d} is a strict inequality.
Combining $\e_i^{(d)} \leq \e_i^{(\infty)}$ from \Cref{part:large-a}
and \cref{claim2d} gives a bound on the number of iterations
that subsumes the bounds in Parts 3(b) and (c) of version 1 of the
arXiv preprint, and asymptotically attains the conjectured expression
in eq.~(99) there.}
of $\e_i^{(d)}$ for finite $d$
and $\e \approx 1$.  The method is similar to \Cref{part:large-a}
above,
with more complex arithmetic and one more step to account for a
factor that is exponentially decaying in $i$.  
Wherever
appropriate, we omit straightforward calculations, and ideas
covered in \Cref{part:large-a} are described in less detail.

The goal is to upper bound the recurrence
\begin{align}
  \e_0^{(d)} &= \delta, \quad 
  \e_i^{(d)} = \D(\e_{i-1}^{(d)},d)
  = \frac{\e_{i-1}^{(d)} +  \frac{1}{d} \; \e_{i-1}^{(d)2}} { 2  \left(
  1 - \of[\Big]{1 - \frac{1}{d}} \e_{i-1}^{(d)} + \frac{1}{2} \of[\Big]{1 - \frac{1}{d}} \e_{i-1}^{(d)2} \right)}
\end{align}
stated in \cref{eq:recurrence,eq:PD}.
As in \cref{part:large-a}, let $i^*$ be the smallest
integer so that $\e_{i^*+1}^{(d)} < 2/3$.  
For $i \in \{0,1,\ldots, i^*\}$, let 
\begin{equation}
    \kappa_i^{(d)} = 1-\e_i^{(d)},
\end{equation}
which implies
\begin{equation}
\kappa_0^{(d)} 
= 1-\e, \quad
\kappa_i^{(d)} = 1 - \e_i^{(d)} 
= \frac{\left( 1 + \frac{2}{d}  \right) \kappa_{i-1}^{(d)} +  \left( 1 - \frac{2}{d}  \right)  \kappa_{i-1}^{(d)2}}{
        \left( 1 + \frac{1}{d}  \right) +  \left( 1 - \frac{1}{d}  \right)  \kappa_{i-1}^{(d)2}},
\label{eq:kappa}
\end{equation}
where we have first expressed $\e_i^{(d)}$ in terms of $\e_{i-1}^{(d)}$ and
then substituted $\e_{i-1}^{(d)} = 1-\kappa_{i-1}^{(d)}$ to obtain the recurrence.
Let
$g^{(d)}(x) = \frac{\left( 1 + \frac{2}{d} \right) x
+ \left( 1 - \frac{2}{d} \right) x^2}{ \left( 1 +
\frac{1}{d} \right) + \left( 1 - \frac{1}{d} \right) x^2}$ 
so that $\kappa_{i}^{(d)} = g^{(d)}(\kappa_{i-1}^{(d)})$.  
Taking the derivative of $g^{(d)}(x)$, we can show that
$g^{(d)}(x)$ is strictly increasing
if $1 + \left( \frac{d-2}{d+2} \right) 2x
- \left( \frac{d-1}{d+1} \right) x^2 > 0$. Let
$S$ be the set of $x$ satisfying this condition.
Note that $[0,1] \subset S$ for all $d \geq 2$.  
On the set $g^{(d)}(S)$, define an inverse $h^{(d)}$ of $g^{(d)}$.  
For $i \in \{0,1,\ldots, i^*\}$, 
define an inverse recurrence for $\kappa_i^{(d)}$ as 
\begin{equation}
\mu_0^{(d)} 
= \kappa_{i^*}^{(d)}, \quad
\mu_i^{(d)} = h^{(d)}(\mu_{i-1}^{(d)}) ,
\end{equation}
so indeed, $\mu_i^{(d)} = \kappa_{i^*-i}^{(d)}$.
In particular, $\mu_{i^*}^{(d)} = 1-\e$.
The definition of $i^*$ and the relations between 
$\e_i^{(d)}, \kappa_i^{(d)}, \mu_i^{(d)}$ for a representative
$d=20$ are summarized in 
\Cref{fig:kappa-mu}. 

We can now state the finite-$d$ analogue of \cref{claim2} in \Cref{part:large-a}.

\begin{claimd}{\ref*{claim2}$^{(d)}$} \label{claim2d}
For all $i \geq 1$, $d \geq 2$, 
\begin{align}
\frac{1}{\mu_i^{(d)}} & > \; \frac{1}{a} \; \frac{1}{\mu_{i-1}^{(d)}} + b - 2 \, c \, \mu_{i-1}^{(d)} ,
\label{eq:claim2d}
\end{align}
where \\[-5ex]
\begin{align}
a = \frac{ d+1 }{ d+2 }, \quad  
b = \frac{ d-2 }{ d+2 }, \quad  
c = \frac{d^3}{(d+2)^3} ~~\text{if}~d\geq 3,~~
c=\frac{1}{7} ~~\text{if}~d=2.
\end{align}
\end{claimd}
\vspace*{0.5ex}

\begin{proof}
Noting the positivity of the right-hand side of
\cref{eq:claim2d}, 
we can rephrase our claim as 
\begin{equation}
  \mu_i^{(d)} = h^{(d)}(\mu_{i-1}^{(d)})
  < \frac{1}{\frac{1}{a} \; \frac{1}{\mu_{i-1}^{(d)}} + b - 2 \, c \, \mu_{i-1}^{(d)}}  \,. 
\label{eq:claim2d2}
\end{equation}
Using $a \leq 1$, $\mu_{i-1}^{(d)} \leq 1/3$, $b \geq 0$, and $c \leq 1$, 
the right-hand side of \cref{eq:claim2d2} can be upper bounded by $3/7$, which is
within the range $S$ where $g^{(d)}$ is strictly increasing, and thus
invertible, so we can insert
the identity function $h^{(d)} g^{(d)}$ and apply the monotonicity of
$h^{(d)}$ (which follows from that of $g^{(d)}$) to obtain a statement
equivalent to \cref{eq:claim2d2}: 
\begin{equation}
  \mu_{i-1}^{(d)} < g^{(d)} \left(
  \frac{1}{\frac{1}{a} \; \frac{1}{\mu_{i-1}^{(d)}} + b - 2 \, c \, \mu_{i-1}^{(d)}}
  \right).
\label{eq:newcondclaim2d}
\end{equation}
As in \Cref{part:large-a}, the above steps translate the analysis from $h^{(d)}$
to $g^{(d)}$.
Furthermore, for $x>0$,
$g^{(d)}(x) = \frac{\left( 1 + \frac{2}{d} \right) x
+ \left( 1 - \frac{2}{d} \right) x^2}{ \left( 1 +
\frac{1}{d} \right) + \left( 1 - \frac{1}{d} \right) x^2} 
= \frac{\left( d+2 \right) \frac{1}{x} 
+ \left( d-2 \right) }{ \left( d+1 \right) \frac{1}{x^2}+ \left( d-1 \right)}$,
so \cref{eq:newcondclaim2d}
is equivalent to
\begin{align}
  \mu_{i-1}^{(d)} < 
  \frac{(d+2) \left( \frac{1}{a} \; \frac{1}{\mu_{i-1}^{(d)}} + b - 2 \, c \, \mu_{i-1}^{(d)} \right) + (d-2)}{(d+1) \left( \frac{1}{a} \; \frac{1}{\mu_{i-1}^{(d)}} + b - 2 \, c \, \mu_{i-1}^{(d)} \right)^2 + (d-1) } \,,
\end{align}
which in turn is equivalent to
\begin{align}
  0 ~ & < 
  (d+2) \left( \frac{1}{a} \; \frac{1}{\mu_{i-1}^{(d)}} + b - 2 \, c \, \mu_{i-1}^{(d)} \right)
\nonumber
\\[-3ex] 
    &\hspace*{23ex} + (d-2) - \mu_{i-1}^{(d)} \left( (d+1) \left( \frac{1}{a} \; \frac{1}{\mu_{i-1}^{(d)}} + b - 2 \, c \, \mu_{i-1}^{(d)} \right)^2 + (d-1)  \right)  
\nonumber
\\[1ex]
& = 2c (d+2)  \mu_{i-1}^{(d)} - (d-1) \mu_{i-1}^{(d)} - b^2 (d+1) \mu_{i-1}^{(d)}
- 4 c^2 (d+1) \mu_{i-1}^{(d) 3} + 4bc(d+1) \mu_{i-1}^{(d) 2}
\nonumber
\\[1ex]
& = \left( 2c (d+2) - (d-1) - b^2 (d+1) \right) \; \mu_{i-1}^{(d)} + 
\left(b-c \mu_{i-1}^{(d)} \right) \; 4 c (d+1) \mu_{i-1}^{(d)2},
\end{align}
where the second line is obtained from expanding the first
line and using the expressions for $a,b$ to cancel out
some of the terms.  
For $d \geq 3$, the choice of $c$ gives
$2 c (d+2) - (d-1) - b^2 (d+1) = 0$, and since $\mu_{i-1}^{(d)} < 1/3$, we have
$b-c \mu_{i-1}^{(d)} > 0$, so the claimed inequality holds.  
For $d=2$, $b=0$. One can substitute the 
actual values of $b,c,d$ and use $\mu_{i-1}^{(d)} < 1/3$ 
to verify the above inequality.
\end{proof}

Examining \cref{claim2d}, and observing that $\mu_{i-1}^{(d)} \leq
\mu_{i-1}^{(\infty)} \rightarrow 0$ as $i \rightarrow \infty$, the
term linear in $c$ has less effect on bounding $\mu_{i-1}^{(d)}$
for large $i$.  Thus we expect $\mu_{i}^{(d)}$ to
decrease at least as quickly as $a^i \mu_{0}^{(d)}$, with additional 
suppression due to the term linear in $b$. 
This motivates a change of variables
\begin{equation}
\mu_{i}^{(d)} = a^i \nu_{i}^{(d)},
\end{equation}
and correspondingly, \cref{claim2d} becomes 
\begin{align}
\frac{1}{\nu_i^{(d)}} & > \frac{1}{\nu_{i-1}^{(d)}} + a^i b - 2 \, a^i c \, \mu_{i-1}^{(d)} \,. 
\end{align}
Repeated application of the above gives 
\begin{align}
\frac{1}{\nu_i^{(d)}} & > \frac{1}{\nu_{0}^{(d)}} + b \sum_{k=1}^i a^k - 2 \, c \, \sum_{k=1}^i a^k \mu_{k-1}^{(d)} \,. 
\label{eq:claim2dnu}
\end{align}
Recall that a lower bound on $\frac{1}{\nu_i^{(d)}}$ gives an upper
bound on $\nu_i^{(d)}$, and in turn on $\mu_i^{(d)}$.
To this end, we take the following steps:
\begin{enumerate}
\item We sum the second term of the right-hand side of \cref{eq:claim2dnu}, and
apply the upper bound 
$\mu_{k}^{(d)} \leq \mu_{k}^{(\infty)} \leq
\frac{1}{k} + \frac{5}{2} \frac{\ln k}{k^2}$ 
for $k \geq 1$ from \Cref{part:large-a} and
$\mu_{0}^{(d)} \leq 1/3$ to the last term 
to obtain 
\begin{align}
\frac{1}{\nu_i^{(d)}} & > \; \frac{1}{\nu_{0}^{(d)}} + b \; a \; \frac{1-a^i}{1-a} 
- 2 \, c \, \sum_{k=1}^{i-1} \; a^{k+1}  
\left( \frac{1}{k} + \frac{5}{2} \frac{\ln k}{k^2} \right) - \frac{2}{3} \; c \; a .
\label{eq:claim2dnubdd}
\end{align}

\item For $0\leq a <1$, we can upper bound
$\sum_{k=1}^{i-1} \frac{a^k}{k}$ by $f(a) = \sum_{k=1}^{\infty} \frac{a^k}{k}$. 
Note that $\frac{df}{da} = \sum_{k=1}^{\infty} a^{k-1} = \frac{1}{1-a}$, so  
integrating and using $f(0) = 0$, we find $f(a) = -\ln(1-a) = \ln(d+2)
\lesssim \ln(d) + 0.7$.

\item We can also upper bound $\sum_{k=1}^{i-1} \frac{a^k}{k}$
by $\sum_{k=1}^{i-1} \frac{1}{k}
\leq 1 + \sum_{k=2}^{i-1} \ln(k+\frac{1}{2}) - \ln(k-\frac{1}{2})
= 1 +  \ln(i-\frac{1}{2}) - \ln(1.5) < \ln(i-\frac{1}{2}) + 0.6$, 
where we have used \cref{eq:lnbound}.   

\item\ First, 
$\sum_{k=1}^{i-1} \; a^{k-1} \frac{\ln k}{k^2}
< \sum_{k=2}^{i-1} \; \frac{\ln k}{k^2}
\lesssim 0.1733 + \sum_{k=3}^{i-1} \; \frac{\ln k}{k^2}$.  
Then, since $\frac{\ln x}{x^2}$ is decreasing for $x
\geq 2$, we have, for all $k \geq 3$, $\frac{\ln k}{k^2} <
\int_{k-1}^k \frac{\ln x}{x^2} dx$, so $\sum_{k=3}^{i-1} \; \frac{\ln
  k}{k^2} \leq \int_2^{\infty} \frac{\ln x}{x^2} dx$.  Finally, 
$\int \frac{\ln x}{x^2} dx = -\frac{1+\ln x}{x} +$constant, so
$\int_2^{\infty} \frac{\ln x}{x^2} dx \lesssim 0.8466$.  Altogether, we 
have $\sum_{k=1}^{i-1} \; a^{k-1} \frac{\ln k}{k^2} < 1.02$.

\item Substituting the bounds from steps 2--4 into \cref{eq:claim2dnubdd}, 
\begin{align}
\frac{1}{\nu_i^{(d)}} & > \; \frac{1}{\nu_{0}^{(d)}} + b \; a \; \frac{1-a^i}{1-a} 
- 2 \, c \, a \; \ln \left( \min\{d,i\} \right) - 1.4 \, c \, a \, - 5.1 \, c \, a^2
- \frac{2}{3} \; c \; a 
\\ & > \; b \; a \; \frac{1-a^i}{1-a} 
- 2 \, c \, a \; \ln \left( \min\{d,i\} \right) + 3 - 7.2 \, c \, a.
\label{eq:claim2dtmp}
\end{align}

Note that \cref{eq:claim2dtmp} holds for all $i \geq 1$.  The cases $i=1,2$ can be verified directly.  Steps 2--4 hold for all $i
\geq 3$, except for the expression $\sum_{k=3}^{i-1} \; \frac{\ln
  k}{k^2}$ in step 4, which can be set to $0$ for $i=3$ without
changing the conclusion.
Reverting the change of variables $\mu_{i}^{(d)} = a^i \nu_{i}^{(d)}$, we have
\begin{align}
\frac{1}{\mu_i^{(d)}}
  & > a^{-i} \left( \; b \; a \; \frac{1-a^i}{1-a} 
- 2 \, c \, a \; \ln \left( \min\{d,i\} \right) + 3 - 7.2 \, c \, a \right)
\\
& > 
a^{-i} \; \left( \frac{ba}{1-a} - 2 \, c \, a \; \ln \left( \min\{d,i\} \right)
+ 3 - 7.2 \, c \, a \right)
- \frac{ba}{1-a} .
\label{eq:muidbdd}
\end{align}
\end{enumerate}

Recall that $\mu_{i^*}^{(d)} = 1-\e$, and our ultimate goal is an
upper bound on $i^*$. Also recall that $i^*+1$ iterations reduce 
$\e_0^{(d)} = \e$ to
$\e_{i^*+1}^{(d)} < \frac{2}{3}$.  Furthermore, $i^* \leq n^{(\infty)*} \; {:}{=} \;
\frac{1}{1-\e} + 2 \ln \left( \frac{1}{1-\e} \right)$ (the upper bound
for $i^*$ for the $d\rightarrow \infty$ case in \cref{claim:n-iter-bound}). 
Incorporating all these observations into \cref{eq:muidbdd},
we obtain
\begin{align} \frac{1}{1-\e} = 
\frac{1}{\mu_{i^*}^{(d)}}
> 
a^{-{i^*}} \; \left( \frac{ba}{1-a} - 2 \, c \, a \; \ln \left( \min\{d, n^{(\infty)*}\} \right)
+ 3 - 7.2 \, c \, a \right)
- \frac{ba}{1-a} .
\label{eq:part3b-bd1}
\end{align}
This allows us to make a transition from the analysis of $\mu_{i}^{(d)}$ to $i^*$.
Let
\begin{equation}
\alpha = \frac{ba}{1-a} = \frac{(d-2)(d+1)}{d+2}, \quad
\beta = \alpha - 2 \, c \, a \; \ln \left( \min\{d, n^{(\infty)*}\} \right)
+ 3 - 7.2 \, c \, a.
\end{equation}
To be concrete, for $d=2,3,4,5$, $(\alpha, \beta) \approx (0,2.08),
(0.8,2.18), (1.67,2.20), (2.57, 2.32)$, respectively,  and as $d$ increases, $\alpha
\approx d$ and $\beta \approx d - 2 \ln(\min\{d,n^{(\infty)*}\})$.
\Cref{eq:part3b-bd1} can now be rephrased as
\begin{align} \frac{1}{1-\e}  
> 
a^{-{i^*}} \; \beta
- \alpha  , 
\end{align}
or
\begin{align}  
  (a^{-1})^{i^*} < 
  \frac{\frac{1}{1-\e} + \alpha}{\beta} .
\label{eq:part3b-bd2}
\end{align}
This gives an upper bound on $i^*$.

\begin{lemma}\label{claim:n-iter-bound-finite-d}
For any $\e \in (2/3,1)$, if an integer $n$ satisfies 
\begin{equation}
\label{eq:part3b-bd4}
n > \frac{\ln \left( \frac{\frac{1}{1-\e} + \alpha}{\beta} \right)}{\ln \frac{1}{a}}
= \begin{cases} \frac{\ln \frac{1}{(1-\e) \beta}}{\ln \frac{4}{3}} \approx  
3.476 \, \ln \frac{1}{1-\e} - 2.546 & \text{if $d=2$} 
\\[2ex] 
\frac{\ln \left( 1+ \frac{1}{\alpha (1-\e)} \right) + \ln \frac{\alpha}{\beta}}{\ln \left( 1+ \frac{1}{d+1} \right)} & \text{if $d\geq 3$},  \end{cases}  
\end{equation} 
then
$\e_{n}^{(d)} < 2/3$.
\end{lemma}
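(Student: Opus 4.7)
The plan is to recognize that \cref{claim:n-iter-bound-finite-d} is essentially a direct repackaging of the inequality \cref{eq:part3b-bd2}, namely $(a^{-1})^{i^*} < (\frac{1}{1-\e} + \alpha)/\beta$, already derived above.  My proof will proceed in three short steps: take logarithms, invoke monotonicity of $\e_i^{(d)}$ in $i$, and then carry out the case-by-case algebraic simplification for $d=2$ and $d\geq 3$.

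First, I will verify that $\beta>0$ and that the right-hand side of \cref{eq:part3b-bd2} is positive, so that taking logarithms preserves the inequality.  Since $\alpha \geq 0$ (as $b \geq 0$ and $a>0$) and $1-\e>0$, the numerator is positive.  For $\beta$, the expression $\beta = \alpha + 3 - 2\, c\, a\, \ln(\min\{d, n^{(\infty)*}\}) - 7.2\, c\, a$ is easily checked to be positive in each case: for $d=2$ one has $\alpha=0$, $c=1/7$, $a=3/4$, so $\beta \approx 3 - 0.297 - 0.771 > 0$; for $d \geq 3$, the growing $\alpha$ quickly dominates the slowly growing logarithmic term.  Taking $\log$ of both sides of \cref{eq:part3b-bd2} then yields $i^* < \ln\bigl((\tfrac{1}{1-\e} + \alpha)/\beta\bigr)/\ln(1/a)$, so any integer $n$ strictly exceeding this bound satisfies $n \geq i^* + 1$.

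Second, I will use the monotonicity of the sequence $\e_i^{(d)}$ in $i$, which follows from \cref{cor:better}: each $\SWAP$ step strictly reduces the error parameter for $\e \in (0,1)$.  Combined with $n \geq i^*+1$, this gives $\e_n^{(d)} \leq \e_{i^*+1}^{(d)} < 2/3$, which is the desired conclusion.

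Finally, for the explicit simplifications: when $d=2$, I substitute $a=3/4$ (whence $\ln(1/a) = \ln(4/3)$ and $1/\ln(4/3) \approx 3.476$) and $b=0$ (whence $\alpha=0$), giving the bound $\ln\bigl(1/((1-\e)\beta)\bigr)/\ln(4/3)$; the numerical constant $-2.546$ then comes from evaluating $-3.476 \ln \beta$ at the approximate value $\beta \approx 2.08$.  When $d \geq 3$, I use $a = (d+1)/(d+2)$ so that $\ln(1/a) = \ln\bigl(1 + 1/(d+1)\bigr)$, and factor the numerator as $\frac{1}{1-\e} + \alpha = \alpha\bigl(1 + \tfrac{1}{\alpha(1-\e)}\bigr)$, splitting the logarithm accordingly to obtain the stated expression.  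The only mild obstacle is the bookkeeping to verify $\beta>0$ uniformly; beyond that the proof is essentially arithmetic.
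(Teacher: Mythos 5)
Your proposal is correct and follows essentially the same route as the paper: the paper states this lemma as an immediate repackaging of \cref{eq:part3b-bd2}, and you fill in exactly the routine steps that are left implicit there (positivity of $\beta$ so the division and logarithm are legitimate, the integer argument giving $n \geq i^*+1$, monotonic decrease of $\e_i^{(d)}$ from \cref{cor:better} together with the definition of $i^*$, and the algebraic case split using $a=(d+1)/(d+2)$ and $\alpha = ba/(1-a)$). Your explicit verification of $\beta>0$ is a worthwhile addition the paper only supports by its tabulated numerical values of $(\alpha,\beta)$.
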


We can further understand the above lemma by making some approximations.
Except for very small dimension $d$, we have $a,b,c \approx 1$,
$\alpha \approx d$, $\beta \approx d - 2 \ln \min\left\{
d,\frac{1}{1-\e} \right\}$, and $\frac{\alpha}{\beta} \approx
1+\frac{2}{d} \ln \min\left\{ d,\frac{1}{1-\e} \right\}$.
For a small constant dimension $d$ and vanishing $1-\e$, we have $\frac{1}{\alpha (1-\e)} \gg 1$. Therefore, 
$1+ \frac{1}{\alpha (1-\e)} \approx \frac{1}{\alpha (1-\e)}$, 
so the number of iterations
is approximately upper bounded by $\frac{\ln \frac{1}{\beta (1-\e)}}{\ln
  1+\frac{1}{d}}$.  Furthermore, replacing $\beta$ by $d$ does not 
  change the leading-order behavior.
If instead, the dimension $d$ is large and $1-\e$ is small but
fixed, so $\alpha (1-\e) \gg 1$, 
then $\ln \left( 1+ \frac{1}{\alpha (1-\e)} \right) \approx 
\frac{1}{\alpha (1-\e)}$ while $\ln \left( 1+ \frac{1}{d+1} \right)
\approx  \frac{1}{d+1}$, and $\ln \frac{\alpha}{\beta} \approx
\frac{2}{d} \ln \frac{1}{1-\e}$.  
\Cref{eq:part3b-bd4} gives an upper bound for the number of iterations that is
approximately 
$(d+1) \left( \frac{1}{\alpha (1-\e)} + \frac{2}{d} \ln \frac{1}{1-\e} \right)
\approx \frac{1}{1-\e} +  2 \ln \frac{1}{1-\e} $, which is similar
to the bound in \cref{part:large-a}.

\subsection{Expected sample complexity}\label{sec:complexity}

Since the $\SWAP$ gadget in \cref{alg:swap} is intrinsically probabilistic,
the number of states it consumes in any particular run is a random variable.
The same applies to our recursive purification procedure $\PURIFY(n)$ (\cref{alg:purify}).
Hence we are interested in the \emph{expected sample complexity} of $\PURIFY(n)$, denoted by $\SC(n,d)$,
which we define as the expected number of states $\rho_0 = \rho(\e)$ consumed during the algorithm,
where the expectation is over the random measurement outcomes in all the swap tests used by
the algorithm, and $d$ is the dimension of $\rho(\e)$.
The sample complexity of $\PURIFY(n)$ is characterized by the following theorem.

\begin{theorem}\label{thm:swap}
Fix $d \geq 2$ and $\e \in (0,1)$. 
Let $\ket{\psi} \in \C^d$ be an unknown state and define $\rho(\e)$ as in \cref{eq:re}.
For any $\epsilon > 0$, there exists a procedure that outputs $\rho(\e')$ with $\e' \leq \epsilon$
by consuming an expected number of 
  $\SC(n,d) <
  4^{
       \min \left\{
                    \frac{1}{1-\e} + 2 \ln \left( \frac{1}{1-\e} \right) , 
                    (d+2) \ln  \left(  \frac{1}{1-\e} \right) 
       \right\}
  }
  \times \frac{3630}{\epsilon}$
copies of $\rho(\e)$. 
\end{theorem}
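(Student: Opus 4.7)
The plan is to express $\SC(n,d)$ as a product over recursion levels and then bound it phase by phase according to the magnitude of $\e_{i-1}$. By independence of the swap-test outcomes at distinct levels and linearity of expectation, each successful level-$i$ output consumes $2/p_i$ level-$(i-1)$ states in expectation, so propagating this multiplicativity down to the leaves yields
\begin{equation*}
  \SC(n,d) = \prod_{i=1}^n \frac{2}{p_i},
\end{equation*}
where I would choose $n$ as the smallest integer with $\e_n \leq \epsilon$. Using the strict monotonicity of $\D$ from \cref{claim:monotonicity}, I would partition $\{1,\dotsc,n\}$ into three consecutive blocks $I_3 = \{i : \e_{i-1} > 2/3\}$, $I_2 = \{i : 1/3 < \e_{i-1} \leq 2/3\}$, and $I_1 = \{i : \e_{i-1} \leq 1/3\}$, which occur in that order.

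For $I_3$ and $I_2$ I would rely only on the trivial bound $p_i \geq 1/2$, immediate from \cref{eq:prob} since $\Tr(\rho^2) \geq 0$, which gives $2/p_i \leq 4$. Then \cref{claim:n-iter-bound,claim:n-iter-bound-finite-d} bound $|I_3|$ by $\frac{1}{1-\e}+2\ln(1/(1-\e)) + O(1)$ and by $(d+2)\ln(1/(1-\e)) + O(1)$ respectively, so taking the minimum reproduces the exponent in the theorem. The brute-force calculation in \cref{sec:part2} gives $|I_2| \leq 5$, contributing a bounded factor of at most $4^5 = 1024$.

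The delicate step is $I_1$: applying $2/p_i \leq 4$ here would cost a factor of $1/\epsilon^2$, because $|I_1| \approx \log_2(1/\epsilon)$ forces $4^{|I_1|} = 1/\epsilon^2$. Instead, for $\e_{i-1} \leq 1/3$ I would use the sharper bound $p_i \geq 1 - \e_{i-1}$, valid because $\tfrac{1}{2}(1-1/d)\e_{i-1}^2 \geq 0$ in \cref{eq:p}, so that
\begin{equation*}
  \prod_{i \in I_1} \frac{2}{p_i} \;\leq\; 2^{|I_1|}\prod_{i \in I_1} \frac{1}{1-\e_{i-1}}.
\end{equation*}
\Cref{claim:bounds} gives $\e_{i_0 + k} \leq 1/(2^k + 2)$ starting from the first index $i_0 \in I_1$, so $|I_1| = \log_2(1/\epsilon) + O(1)$ iterations suffice to push $\e_n \leq \epsilon$, yielding $2^{|I_1|} = O(1/\epsilon)$. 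The residual factor $\prod_{i\in I_1}(1-\e_{i-1})^{-1}$ is bounded by a universal constant using $-\ln(1-x) \leq 2x$ for $x \leq 1/2$ together with the convergent series $\sum_{k \geq 0} 1/(2^k+2)$.

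Multiplying the three contributions gives $\SC(n,d) \leq 4^{|I_3|} \cdot 4^5 \cdot O(1/\epsilon)$, matching the form in the theorem. The principal obstacle is constant-chasing: tracking the $O(1)$ additive slacks in $|I_3|$ coming from \cref{claim:n-iter-bound,claim:n-iter-bound-finite-d}, the exact integer-ceiling overhead hidden in $|I_1|$, and the explicit numerical bound on $\sum_k 1/(2^k+2)$ through the product so that all the smaller factors collapse into the single explicit constant $3630$ rather than a looser one. No fundamentally new idea is needed beyond results already proved in \cref{sec:monotonicity,sec:asymptotics}; the work is purely quantitative bookkeeping.
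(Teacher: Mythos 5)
Your proposal follows essentially the same route as the paper: the product formula $\SC(n,d)=2^n/\prod_{i=1}^n p_i^{(d)}$ is the paper's \cref{claim:cost}, the three-block decomposition by the size of $\e_{i-1}$ mirrors the paper's three cases, the iteration counts come from the same ingredients (\cref{claim:bounds}, the five-iteration bound of \cref{sec:part2}, and \cref{claim:n-iter-bound,claim:n-iter-bound-finite-d}), and your treatment of the small-error block via $p_i \geq 1-\e_{i-1}$ together with a convergent product is exactly the paper's Case~1, where $\prod_{i=1}^n(1-\e_{i-1})> 1-2\e$ is shown by induction and yields the $6/\epsilon$ contribution.

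The one place the plan falls short of the statement as written is the middle block $I_2$. Using only $p_i\geq 1/2$ there gives $4^{|I_2|}\leq 4^5=1024$, and combined with the $6/\epsilon$ from the small-error block this yields $6144/\epsilon$, not $3630/\epsilon$. The constant $3630$ is obtained precisely by using the sharper bound $p_i > P(2/3,\infty)=5/9$ on this block (valid by the monotonicity of $P$ from \cref{claim:monotonicity}), so that each of the at most five iterations costs $2/(5/9)=3.6$ expected samples, $3.6^5\leq 605$, and $605\times 6=3630$. This is a one-line fix, but as stated your bookkeeping cannot collapse to the claimed constant: you would either need this sharper probability bound on $I_2$ or settle for a larger explicit constant than the theorem asserts.
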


Our strategy for proving \cref{thm:swap} is as follows.  In
\Cref{claim:cost}, we derive an expression for the expected sample
complexity $\SC(n,d)$ of $\PURIFY(n)$ in terms of the recursion depth
$n$ and the product $\prod_{i=1}^n p_i^{(d)}$ of success probabilities
$p_i^{(d)}$ of all steps. Then we prove the theorem by lower
bounding the product $\prod_{i=1}^n p_i^{(d)}$ using \cref{eq:PD},
\Cref{claim:bounds}, and upper bounds on the number of iterations $n$
from \Cref{sec:asymptotics}.

\begin{lemma}\label{claim:cost}
The expected sample complexity of $\PURIFY(n)$ is
\begin{equation}
  \SC(n,d) = \frac{2^n}{\prod_{i=1}^n p_i^{(d)}},
  \label{eq:Cnd}
\end{equation}
where the probabilities $p_i^{(d)}$ are subject to the recurrence in \cref{eq:recurrence}.
\end{lemma}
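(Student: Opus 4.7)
The plan is to establish the one-step recurrence
\begin{equation*}
  \SC(n,d) = \frac{2}{p_n^{(d)}} \, \SC(n-1,d), \qquad \SC(0,d) = 1,
\end{equation*}
and then iterate. The base case is immediate: $\PURIFY(0)$ simply requests one fresh copy of $\rho_0$ from the stream and returns it, consuming exactly one input sample deterministically.

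For the inductive step, I would analyze one top-level invocation of $\PURIFY(n)$ for $n \geq 1$. Such a call runs the $\SWAP$ gadget on pairs of fresh outputs of $\PURIFY(n-1)$, each of which is an independent copy of the state $\rho_{n-1}$. Since the swap test on two copies of $\rho_{n-1}$ yields outcome $a=0$ with probability $p_n^{(d)}$ (by \cref{eq:recurrence}), and each trial within $\SWAP$ uses two new copies regardless of whether it succeeds or fails, the number of trials $T$ until the first success is geometric with mean $\mathbb{E}[T] = 1/p_n^{(d)}$, and the total number of $\rho_{n-1}$-copies requested is $2T$. Each of these $2T$ copies is produced by an independent fresh call to $\PURIFY(n-1)$.

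The key observation making the expectations factorize is that $T$ depends only on the swap-test measurement outcomes applied to states of the form $\rho_{n-1}$, while the sample cost $X_j$ of the $j$th inner call to $\PURIFY(n-1)$ depends only on an independent block of fresh $\rho_0$'s and independent fresh randomness used inside that subroutine. Thus $T$ is independent of the i.i.d.\ sequence $(X_j)$, each with $\mathbb{E}[X_j] = \SC(n-1,d)$. By Wald's identity,
\begin{equation*}
  \SC(n,d) = \mathbb{E}\!\left[\sum_{j=1}^{2T} X_j\right] = \mathbb{E}[2T]\cdot \SC(n-1,d) = \frac{2}{p_n^{(d)}}\, \SC(n-1,d).
\end{equation*}

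Unwinding the recurrence gives the claimed closed form $\SC(n,d) = 2^n / \prod_{i=1}^n p_i^{(d)}$. The only delicate point is the independence argument invoked to justify Wald's identity; I expect this to be the main place where care is needed, but it follows cleanly from the streaming structure of the algorithm, in which every recursive subcall consumes its own disjoint block of fresh $\rho_0$ samples and its own fresh ancilla randomness, so the random variables $T$ and $(X_j)$ are functions of disjoint portions of the underlying randomness.
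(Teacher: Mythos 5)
Your proof is correct and follows essentially the same route as the paper: the expected number of copies of $\rho_{i-1}$ consumed at level $i$ is $2/p_i^{(d)}$, and these factors multiply across the $n$ levels of recursion. The paper justifies the multiplicativity of expectations in one line by noting that measurement outcomes at different levels are independent, whereas you make this precise via Wald's identity applied to the geometric number of trials and the i.i.d.\ costs of the inner calls---a slightly more careful rendering of the same argument.
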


\begin{proof}
For any $i \geq 1$, the expected number of copies of $\rho_{i-1}$ consumed by the $\SWAP$ gadget within $\PURIFY(i)$ is $2/p_i^{(d)}$, as in \cref{eq:exp_cpy}.
Since each $\rho_{i-1}$ in turn requires an expected number of $2/p_{i-1}^{(d)}$ copies of $\rho_{i-2}$, the overall number of states $\rho_0$ consumed is equal to the product of $2/p_i^{(d)}$ across all $n$ levels of the recursion.
The expectation is multiplicative because the measurement outcomes at different levels are independent random variables.
\end{proof}

\begin{proof}[Proof of \cref{thm:swap}]
To lower bound the denominator of $\SC(n,d)$ in
\cref{eq:Cnd}, 
we first lower bound the values $p_i^{(d)}$ by bounding \cref{eq:PD} as
\begin{align}
  P(\e, d) = 1 - \of[\Big]{1 - \frac{1}{d}} \e + \frac{1}{2} \of[\Big]{1 - \frac{1}{d}} \e^2
  > \max \left( 1-\e, \frac{1}{2} \of[\Big]{1 + \frac{1}{d}} \right)
  > \max \left(1-\e, \frac{1}{2}\right)
\label{eq:prob-bdd}.
\end{align}
The first bound of $1-\e$ is immediate, and the second bound 
is $P(1, d)$, 
which follows from the monotonicity of $P(\e, d)$ established in \Cref{claim:monotonicity}. 

We now combine \cref{claim:cost} and \cref{eq:prob-bdd} to determine the
number of iterations of \cref{alg:purify} to reach the
desired target error parameter $\epsilon > 0$, starting from an
arbitrary initial error parameter $\e \in (0,1)$, for an arbitrary
$d \geq 2$.  Following the structure of \cref{sec:asymptotics}, we
divide the proof into three cases, depending on the initial error
parameter $\e$.

\textbf{Case 1: $0 < \e < 1/2$}.
Using the final conclusion from \cref{part:small}, 
the number of iterations $n$ to achieve a final error 
parameter $\epsilon$ can be chosen to be no more than 
$\lceil \log_2 \frac{\e}{(1-2\e) \epsilon} \rceil \leq 
\log_2 \frac{\e}{(1-2\e) \epsilon} + 1$.  
We can thus upper bound the numerator of $\SC(n,d)$ as follows:
\begin{equation}
  2^n \leq 2^{1 + \log_2 \frac{\e}{(1-2\e) \epsilon}} 
  = 2 \cdot \frac{\e}{\epsilon} \cdot \frac{1}{1-2\e}.
  \label{eq:num}
\end{equation}
Using \cref{eq:012} in \cref{claim:bounds}, 
\begin{align}
  \prod_{i=1}^n p_i^{(d)}
  > \prod_{i=1}^n \left( 1 - \e_{i-1}^{(d)} \right)
  &\geq \prod_{i=1}^n \of*{1 - \frac{\e}{2^{i-1}(1-2\e)+2\e}} \label{eq:pid} \\
  &= 1 - 2 \e (1-2^{-n}) \label{eq:induction} \\
  &> 1 - 2 \e, \label{eq:den}
\end{align}
where \cref{eq:induction} can be proved by induction on $n$.
Combining \cref{eq:Cnd,eq:num,eq:den},
\begin{equation}
  \SC(n,d) \leq \frac{2\e}{\epsilon(1-2\e)^2}.
  \label{eq:SCn}
\end{equation}

\vspace*{2ex}

\textbf{Case 2: $1/3 \leq \e < 2/3$}.  From \cref{eq:prob-bdd} 
and the monotonicity of $P(\e, d)$ in \Cref{claim:monotonicity}, 
$p_i^{(d)} > P(2/3,\infty) = 5/9$.  Combining this with the factor 
of $2$ in the
numerator, each iteration uses at most $3.6$ noisy states in
expectation.
At most $5$ iterations are sufficient to reduce the
error from $2/3$ to less than $1/3$ (using the $d=\infty$ case as
an upper bound for all $d$), so $3.6^5 \leq 605$ samples suffice.  
Applying the analysis of the previous case for $\e=1/3$, we see that
$6/\epsilon$ samples reduce the
error from $1/3$ to at most $\epsilon$.  
Combining the bounds from these two stages, we have 
\begin{equation}
  \SC(n,d) \leq \frac{3630}{\epsilon} .
\end{equation}

\textbf{Case 3: $2/3 \leq \e < 1$}. 
Our strategy is to first drive the error parameter below $2/3$
and then apply the bound from case 2.  Recall from 
the previous subsection that this takes $i^*+1$ iterations, 
but one iteration is already counted in case 2.  
Since $p_i^{(d)} > \frac{1}{2} \left(1+\frac{1}{d}\right)$ 
from \cref{eq:prob-bdd}, 
$\SC(n,i^*) \leq \left(\frac{4}{1+\frac{1}{d}} \right)^{i^*}$.
For large enough $d$, the simpler complexity bound of $4^{i^*}$ suffices.  
\Cref{part:large-a,part:large-b} each provide a complexity
bound $\SC(i^*,d)$ for reducing the error from $\e$ to $2/3$, and
we can take the minimum of the two. 
From \Cref{claim:n-iter-bound} in \cref{part:large-a},
\begin{equation}
    \label{eq:bound_no_d}
    \SC(i^*,d) < 4^{i^*} \leq 4^{\frac{1}{1-\e} + 2 \ln \left( \frac{1}{1-\e} \right)} .
\end{equation}
For small dimension $d \geq 3$, \Cref{part:large-b} provides a sharper
bound of $d (1-\e) \ll 1$.  
In this case, we can use \cref{eq:part3b-bd2} to obtain 
\begin{align}  
   \SC(i^*,d) < 4^{i^*} = (a^{-1})^{\frac{\ln 4}{\ln \frac{d+2}{d+1}} i^*} <    
   \left(\frac{\frac{1}{1-\e} + \alpha}{\beta} \right)^{\frac{\ln 4}{\ln \frac{d+2}{d+1}}} 
   < 4^{\frac{ \ln  \left( {\frac{1}{1-\e} + \frac{\alpha}{\beta}} \right)}{\ln \frac{d+2}{d+1}}},
\label{eq:part3b-bd3}
\end{align}
where $\alpha,\beta$ are as defined in \Cref{part:large-b}.  
Combining the above bounds and that from the previous case, we upper bound
the sample complexity to reduce the noise parameter from $\e > 2/3$ to $\epsilon$: 
\vspace*{-3ex}
\begin{align}  
   \SC(n,d) < 4^{\min \left\{ \frac{1}{1-\e} + 2 \ln \left( \frac{1}{1-\e} \right) , 
     \frac{\ln \left( \frac{\frac{1}{1-\e} + \alpha}{\beta}\right)}{\ln \frac{d+2}{d+1}}   \right\}}  \times
   \frac{3630}{\epsilon} .
\end{align}
Given $d$, $\e$, the above provides a tight bound.  However, we can
also simplify the second bound by observing 
\begin{align}  
  \ln \left(  \frac{ \frac{1}{1-\e} + \alpha }{\beta} \right)
= \ln  \left(  \frac{1}{1-\e}\right) + \ln \left(   \frac{\alpha(1-\e)}{\beta} + \frac{1}{\beta} \right) .
\label{eq:small-d-bdd}
\end{align}
For large $d$, $\alpha \approx \beta \approx d$, so
$\frac{\alpha(1-\e)}{\beta} + \frac{1}{\beta} \rightarrow 1-\e < 1/3$.  
In fact, for all $d \geq 2$, 
\begin{align}
\frac{\alpha(1-\e)}{\beta} + \frac{1}{\beta} 
& < \alpha/3 + \frac{1}{\beta} \\
& < \frac{\left(\beta + 2 \, c \, a \; \ln ( \min\{d, n^{(\infty)*}\} )
- 3 + 7.2 \, c \, a\right)/3 +1}{\beta} \\
& < \frac{1}{3} + \frac{2 \, c \, a \; \ln d 
+ 7.2 \, c \, a}{3\beta} .
\end{align}
As $d$ increases, $ca \rightarrow 1$ and $\beta$ increases 
roughly as $d$, and the second term above initially increases to
attain a maximum of $0.5512$ at $d=7$ and decreases thereafter, 
so the last line is upper bounded by $0.8845$. Applying 
this bound to \cref{eq:small-d-bdd} gives
\begin{align}  
  \ln \left(  \frac{\frac{1}{1-\e} + \alpha }{\beta} \right)
< \ln  \left(  \frac{1}{1-\e}\right) .
\end{align}
Using
$\ln(1+x) \geq \frac{x}{x+1}$ to get
$\ln \frac{d+2}{d+1} \geq \frac{1}{d+2}$,
we have
\begin{align}  
  \SC(n,d) <
  4^{
       \min \left\{
                    \frac{1}{1-\e} + 2 \ln \left( \frac{1}{1-\e} \right) , 
                    (d+2) \ln  \left(  \frac{1}{1-\e} \right) 
       \right\}
  }
  \times \frac{3630}{\epsilon}
\end{align}
as claimed.
\end{proof}

As a side note, 
if $\alpha (1-\e) \ll 1$, we can obtain a tighter approximation (though not a strict bound)
for the sample complexity bound by revising \cref{eq:small-d-bdd} as
\begin{align}  
  \ln \left( \frac{\frac{1}{1-\e} +  \alpha }{\beta}  \right)
= \ln  \left(  \frac{1}{1-\e}\right) + \ln \left(   \frac{\alpha(1-\e)}{\beta} + \frac{1}{\beta}  \right) 
\approx \ln  \left(  \frac{1}{1-\e}\right) + \ln \left( \frac{1}{\beta} \right) 
=  \ln  \left(  \frac{1}{(1-\e) \beta}\right) ,
\nonumber
\end{align}
\vspace*{1ex}
$\!\!$so 
\vspace*{-3ex}
\begin{align}  
  \SC(n,d) \lesssim
  4^{ \frac{ \ln  \frac{1}{(1-\e)\beta}}{ \ln \frac{ d+2 }{ d+1} }} 
  \times \frac{3630}{\epsilon} , 
\end{align}
which recovers a polynomial dependence on $\frac{1}{1-\e}$.

We conclude this section by describing the gate complexity of $\PURIFY$. Since the number of internal nodes in a binary tree is upper bounded by the number of leaves,
the total number of $\SWAP$ gadgets is upper bounded by twice the sample complexity. As mentioned in \cref{sec:swaptest}, the swap test for qudits can be performed using $O(\log d)$ two-qubit gates. Therefore, from \cref{eq:bound_no_d} we have the following upper bound on the gate complexity of purification for constant $\delta$.

\begin{corollary}
For any dimension $d \geq 2$ and any fixed initial error parameter $\e \in (0,1)$, the expected gate complexity of $\PURIFY$
to achieve an error parameter $\epsilon$ is $O(\frac{1}{\epsilon} \log d)$.
\label{cor:gate_comp}
\end{corollary}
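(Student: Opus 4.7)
The plan is to bound the expected total gate count of $\PURIFY(n)$ by separately estimating (i) the expected number of swap tests executed throughout the recursion and (ii) the gate cost of a single qudit swap test, then invoke \cref{thm:swap} to eliminate the residual dependence on $\SC(n,d)$.

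For step (i), I would introduce two expected counts at each recursion level $i \in \{1,\dots,n\}$: let $N_i$ denote the expected number of successful $\SWAP$-gadget outputs produced and let $T_i$ denote the expected total number of swap tests (successful or failed) executed. Clearly $N_n = 1$ and, by definition of the sample complexity, $N_0 = \SC(n,d)$. Because each swap test at level $i$ consumes exactly two fresh copies of $\rho_{i-1}$, we have $N_{i-1} = 2\,T_i$; because the $\SWAP$-gadget at level $i$ terminates upon its first success (which has probability $p_i^{(d)}$), we also have $T_i = N_i/p_i^{(d)}$. Unrolling gives $N_i = \prod_{j=i+1}^{n} 2/p_j^{(d)}$, and since each factor is at least $2$, $N_i \le N_0/2^i$. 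Summing,
\begin{equation}
  \sum_{i=1}^{n} T_i \;=\; \sum_{i=1}^{n} \frac{N_{i-1}}{2}
  \;\le\; \frac{N_0}{2} \sum_{i=0}^{\infty} 2^{-i}
  \;=\; N_0 \;=\; \SC(n,d),
\end{equation}
which formalizes the informal ``at most twice the sample complexity'' claim that precedes the corollary.

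For step (ii), a single invocation of the swap test circuit of \cref{fig:swap} on two $d$-dimensional registers consists of two Hadamard gates, one measurement, and one controlled-swap between the two registers. The controlled-swap decomposes into $\lceil \log_2 d \rceil$ controlled-swaps of pairs of constituent qubits, each implementable with a constant number of two-qubit gates, so the total gate cost of a single swap test is $O(\log d)$ uniformly across all recursion levels.

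Combining (i) and (ii), the expected total gate count of $\PURIFY(n)$ is $O(\SC(n,d) \cdot \log d)$. To finish, for any fixed $\e \in (0,1)$ the first argument of the minimum in the exponent of \cref{thm:swap} is a constant depending only on $\e$, so \cref{eq:bound_no_d} yields $\SC(n,d) = O(1/\epsilon)$ with the implicit constant depending on $\e$ but not on $d$. Substituting gives the claimed $O\bigl(\tfrac{1}{\epsilon}\log d\bigr)$ bound. The only subtle step is the bookkeeping in (i), where one must couple swap tests to sample consumption across all levels while correctly averaging over the geometric repetition of the $\SWAP$-gadget; once this is set up, steps (ii) and the appeal to \cref{thm:swap} are routine.
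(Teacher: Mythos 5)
Your proposal is correct and follows essentially the same route as the paper: count the total number of swap tests (the paper does this in one line by noting that internal nodes of the recursion tree are bounded by the number of leaves, i.e., by the sample complexity), multiply by the $O(\log d)$ gate cost of a single qudit swap test, and then use \cref{eq:bound_no_d} to conclude $\SC(n,d)=O(1/\epsilon)$ for fixed $\e$. Your level-by-level expectation bookkeeping in step (i) is just a more explicit version of the paper's tree-counting argument and arrives at the same $O(\SC(n,d))$ bound on the number of swap tests.
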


\section{Applications} \label{sec:apps} 

\subsection{Simon's problem with a faulty oracle}\label{sec:simon} 

In this subsection we describe an application of quantum state purification to a query complexity problem with an inherently faulty oracle. Query complexity provides a model of computation in which a black box (or oracle) must be queried to learn information about some input, and the goal is to compute some function of that input using as few queries as possible. It is well known that quantum computers can solve certain problems using dramatically fewer queries than any classical algorithm.

Regev and Schiff \cite{RS08} studied the effect on query complexity of providing an imperfect oracle. They considered the unstructured search problem, where the goal is to determine whether some black-box function $f\colon[M] \to \{0,1\}$ has any input $x \in [M]$ for which $f(x)=1$. This problem can be solved in $O(\sqrt M)$ quantum queries \cite{Grover96}, whereas a classical computer needs $\Omega(M)$ queries. Regev and Schiff showed that if the black box for $f$ fails to act with some constant probability, the quantum speedup disappears: then a quantum computer also requires $\Omega(M)$ queries. Given this result, it is natural to ask whether a significant quantum speedup is ever possible with a faulty oracle.

Here we consider the quantum query complexity of Simon's problem \cite{S97}. In this problem we are given a black-box function $f \colon \{0,1\}^m \to \{0,1\}^m$ with the promise that there exists a hidden string $s \in \{0,1\}^m \setminus \{0^m\}$ such that $f(x) = f(y)$ if and only if $x = y$ or $x \oplus y = s$. The goal is to find $s$. With an ideal black box, this problem can be solved with $O(m)$ quantum queries (and $\poly(m)$ additional quantum gates), whereas it requires $2^{\Omega(m)}$ classical queries (even for a randomized algorithm).

Now we consider a faulty oracle. If we use the same model as in \cite{RS08}, then it is straightforward to see that the quantum query complexity remains $O(m)$. Simon's algorithm employs a subroutine that outputs uniformly random strings $y$ with the property that $y \cdot s=0$. By sampling $O(m)$ such values $y$, one can determine $s$ with high probability. If the oracle fails to act, then Simon's algorithm is instead guaranteed to output $y=0$. While this is uninformative, it is consistent with the condition $y \cdot s=0$. Thus it suffices to use the same reconstruction procedure to determine $s$, except that one must take more samples to get sufficiently many nonzero values $y$. But if the probability of the oracle failing to act is a constant, then the extra query overhead is only constant, and the query complexity remains $O(m)$.

Of course, this analysis is specific to the faulty oracle model where the oracle fails to act with some probability. We now consider instead another natural model of a faulty oracle where, with some fixed probability $\e \in (0,1)$, the oracle depolarizes its input state. In other words, the oracle acts on its input density matrix $\rho$ according to the quantum channel $D_\e$ where
\begin{equation}
  D_\e(\rho) = (1-\e) U_f \rho U_f^\dagger+ \e \frac{I}{2^{2m}}
\end{equation}
where $U_f$ is the unitary operation that reversibly computes $f$ (concretely, $U_f\ket{x,a}=\ket{x,a \oplus f(x)}$).
In this case, if we simply apply the main subroutine of Simon's algorithm using the faulty oracle, we obtain a state
\begin{equation}
  \rho' = (1-\e)\ket{\Psi}\bra{\Psi} + \e \frac{I}{2^{2m}}
\end{equation}
before measuring the first register,
where
\begin{equation}
  \ket{\Psi} = \frac{1}{2^m} \sum_{x \in \{0,1\}^m} \sum_{\substack{y \in \{0,1\}^m \\ \text{s.t. } y \cdot s = 0}} (-1)^{x \cdot y}\ket{y} \ket{f(x)}
\end{equation}
is the ideal output.

If we simply measure the first register of $\rho'$, then with probability $1-\e$ we obtain $y$ satisfying $y \cdot s = 0$, but with probability $\e$, we obtain a uniformly random string $y$. Determining $s$ from such samples is an apparently difficult ``learning with errors'' problem. While $s$ is information-theoretically determined from only $O(m)$ samples---showing that $O(m)$ quantum queries suffice---the best known (classical or quantum) algorithm for performing this reconstruction takes exponential time \cite{R10}.

We can circumvent this difficulty using quantum state purification. Suppose we apply $\PURIFY$ to copies of $\rho'$, each of which can be produced with one (faulty) query. Since Simon's algorithm uses $O(m)$ queries in the noiseless case, purifying the state to one with $\epsilon = O(1/m)$ ensures that the overall error of Simon's algorithm with a faulty oracle is at most constant. By \cref{thm:swap}, the expected number of copies of $\rho'$
consumed by $\PURIFY$ is $O(\frac{1}{\epsilon})= O(m)$. We repeat this process $O(m)$ times to obtain sufficiently many samples to determine $s$.
Overall, this gives an algorithm with query complexity $O(m^2)$. Since the purification can be performed efficiently by \cref{cor:gate_comp}, this algorithm uses only $\poly(m)$ gates.

Note that the same approach can be applied to any quantum query algorithm that applies classical processing to a quantum subroutine that makes a constant number of queries to a depolarized oracle. However, this approach is not applicable to a problem such as unstructured search for which quantum speedup requires high query depth \cite{Zal99}.

\subsection{Relationship to mixedness testing and quantum state tomography}\label{sec:mt-qst} 

In the problem of \emph{mixedness testing} \cite{MdW13}, copies of an
unknown $d$-dimensional quantum state $\rho$ are available, where
$\rho$ is promised to be either the maximally mixed state
$\frac{I}{d}$ or at least $\eta$-far from $\frac{I}{d}$ in trace
distance.
References \cite{DW15,Wright-thesis-16} establish a sample complexity
of $\Theta(d/\eta^2)$ copies of $\rho$ for this problem. 
In particular, the
lower bound is proved for a state $\rho$ with eigenvalues
$\frac{1}{d}(1 \pm \frac{\eta}{2})$, each with degeneracy
$\approx \frac{d}{2}$.

If the state $\rho$ is restricted to the form $\rho(\e)$ given by
\cref{eq:re}, where $\e=1$ or $\e \leq 1-\frac{\eta}{2}$, one can
solve the mixedness testing problem simply by applying \cref{alg:purify}
as an attempt to purify the samples.  If $\e=1$, the swap
test in each iteration fails with probability $1/2$ and 
always outputs $I/d$. If instead, $\e \leq 1-\frac{\eta}{2}$, 
then after
$15 + \frac{1}{1-\e} + 2 \ln \left( \frac{1}{1-\e} \right)
\leq 15 + \frac{2}{\eta} + 2 \ln  \frac{2}{\eta}$ 
iterations, the algorithm outputs $\rho(\epsilon)$ with
$\epsilon \leq 2^{-10}$, and the swap test in the final
iteration passes with near certainty.
The difference in the probability of passing the swap test in the
last iteration can be used to determine whether 
$\e=1$ or $\e \leq 1-\frac{\eta}{2}$ with only a few
repetitions.  
The sample complexity found in \Cref{sec:complexity} has worse
dependence on $\eta$ (compared to $\Theta(d/\eta^2)$ from
\cite{DW15,Wright-thesis-16}), but it is independent of the dimension $d$.
Therefore, under the promise that the unknown $\rho$ has the form of a depolarized pure state as in \cref{eq:re}, 
the protocol in \cite{DW15,Wright-thesis-16} is
not optimal in terms of the dimension $d$.

In the problem of quantum state tomography, copies of an
unknown $d$-dimensional quantum state $\rho$ are available, and a
desired accuracy level $\eta>0$ is specified.  The output is a
classical description of a $d$-dimensional quantum state
$\sigma$ such that $\|\rho-\sigma\|_1 \leq \eta$.  
For this application, we adopt the 
trace distance as a measure of accuracy which better facilitates the discussion.

One method to perform quantum state purification of an unknown 
state $\rho$ of the form $\rho(\e) = (1 - \e) \proj{\psi}
+ \e \, \frac{I}{d}$ given by \cref{eq:re} is to
first perform quantum state tomography, obtain an estimate $\sigma$ of
$\rho(\e)$, and then prepare a
copy of the principal eigenvector of $\sigma$.
Let $\ket{\phi}$ be the principal eigenvector of $\sigma$.
We now determine how many samples suffice to achieve $\|\proj{\psi} - \proj{\phi} \|_1 \leq \epsilon$.

Since $\proj{\psi}$ is scaled by a factor 
of $1-\e$ in $\rho(\e)$, we expect the required accuracy for tomography $\eta$ to 
scale linearly in $1-\e$.  Therefore, we let $\eta = (1-\e) \xi$ and 
determine $\xi$ as a function of the allowed noise parameter
$\epsilon$ for the purification.  Thus $\xi$ relates the accuracy of tomography to that of purification.

The accuracy of this approach to purification can be understood as follows.  
If $\| \rho(\e) - \sigma \|_1 \leq \eta$
and $\lambda$ is the principal eigenvalue of $\sigma$, 
then
\begin{align}
\lambda \geq (1-\e) + \e/d - (1-\e) \xi = 1/d + (1-\e)(1-\xi - 1/d).
\end{align}
Furthermore, since $\ket{\phi}$ is the principal eigenvector
of $\sigma$, 
\begin{align}
(1-\delta) \xi \geq \Tr \proj{\phi} (\sigma-\rho(\e)) =   
|\lambda - (1-\e) | \langle \phi | \psi \rangle |^2 - \e/d|,
\end{align}
where we have used 
$\| X \|_1 = \max\{ \Tr(UX) : \text{$U$ unitary}\}$ for a square matrix $X$ 
to obtain the leftmost inequality. 
Combining the two inequalities above, 
$| \langle \phi | \psi \rangle |^2 \geq 1-2\xi$.  The trace
distance 
between $\ket{\psi}$ and $\ket{\phi}$ is
$\sqrt{1-| \langle \phi | \psi \rangle |^2}
\leq \sqrt{2\xi} = \epsilon$ if we choose $\xi = \epsilon^2/2$.  

Having related $\xi$ to $\epsilon$, it remains to 
determine the sample complexity for the tomography step.
The sample complexity is
$O(d^2/\eta^2) = O\left(d^2 (1-\e)^{-2} \epsilon^{-4} \right)$ 
if collective measurements
are allowed, and $O(d^3/ \eta^2) 
= O(d^3 (1-\e)^{-2} \epsilon^{-4})$ if
one is restricted to single-copy (streaming) measurements  
(see for example \cite{OW15-tomo,Wright-thesis-16,HHJWY15}).  We remark that for $d=2^n$,
these measurements can be chosen to be single-qubit Pauli
measurements, which can be efficiently performed.  
Additional structure of the states $\rho(\e)$ given by \cref{eq:re},
such as low rank of the pure-state component and tracelessness
of the noisy component, likely reduce the complexity further.  We
leave a detailed analysis to future research.

Comparing the sample complexity for purification using
\cref{alg:purify} (see \Cref{sec:complexity}) with that using state
tomography, when $d$ is small 
and $1-\e$ is vanishing, state
tomography has better dependence on $1-\e$ (even with single-copy
measurements).
One disadvantage of \cref{alg:purify} is that many copies
of $\rho(\e)$ are discarded in the process when the 
swap test fails, until $\e$
drops below $1/2$.  This is a major source of inefficiency 
as the postmeasurement quantum states from 
failed swap tests still contain significant  
information about $\rho(\e)$. 
In contrast, state tomography does not waste any samples,
and the amplification of the small signal has only 
inverse square dependence on $1-\e$.  
However, for any constant $\e$, or for growing dimension $d$
and small but non-vanishing $1-\e$, \cref{alg:purify} has
dimension-independent complexity, unlike state
tomography (which involves learning the state and
necessarily has complexity growing with dimension).  
One additional disadvantage of tomography is 
that it can output a target state that has high 
quantum circuit complexity to prepare.

\subsection{Relationship to quantum majority vote problem} \label{sec:q-majority} 

Purification is closely related to the following \emph{quantum majority vote} problem \cite{Majority}: given an $N$-qubit state $U\xp{N} \ket{x}$ for an unknown unitary $U \in \U{2}$ and bit string $x \in \set{0,1}^N$, output an approximation of $U \, \ket{\mathrm{MAJ}(x)}$, where $\mathrm{MAJ}(x) \in \set{0,1}$ denotes the majority of $x$.
While the input to this problem is not of the form $\rho\xp{N}$, applying a random permutation produces a separable state whose single-qubit marginals are identical and equal to a state $\rho$ as in \cref{eq:re}.
The optimal quantum majority vote procedure for qubits \cite{Majority} appears to coincide with the optimal qubit purification procedure \cite{CEM99}.
However, its generalization to qudits has been investigated only in very limited cases \cite{LP}.
Our qudit purification problem can be interpreted as a generalization of quantum majority voting to higher dimensions.

\section{Sample complexity lower bound}\label{sec:lowbdd} 

In this section we prove a lower bound on the cost of purifying a $d$-dimensional state. In particular, this bound shows that for constant $d$, the $\epsilon$-dependence of our protocol is optimal.

\begin{lemma}
  \label{lm:embed}
  For any constant input error parameter $\e>0$,
  the sample complexity of purifying a $d$-dimensional state $\rho({\e})$ with output fidelity $1-\epsilon$ is $\Omega(\frac{1}{d\epsilon})$.
\end{lemma}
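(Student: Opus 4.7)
The plan is to reduce the $d$-dimensional purification task to the qubit case, where a matching lower bound is known, by a noise-adding map that converts a qubit sample into a single qudit sample of the prescribed form $\rho^{(d)}(\e)$. Any lower bound on the qubit side then lifts to a lower bound on the qudit side, because a purifier that works for every $\ket\psi \in \C^d$ works in particular when $\ket\psi$ is promised to lie in a fixed two-dimensional subspace.

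Fix such a subspace $V = \spn\{\ket 0,\ket 1\} \subset \C^d$ with orthogonal projector $P_V$, and for a qubit state $\tau$ supported on $V$ define
\[
\Lambda(\tau) := (1-r)\,\tau \;+\; r\,\frac{I - P_V}{d-2},
\]
which extends $\tau$ to $\C^d$ by mixing in the maximally mixed state on the orthogonal complement. A short linear-algebra calculation shows that $\Lambda(\rho^{(2)}(\e_q)) = \rho^{(d)}(\e)$ precisely when
$r = (d-2)\e /(2 + (d-2)\e)$ and $\e_q = 2\e/(d(1-\e) + 2\e)$. For any fixed $\e \in (0,1)$, one has $\e_q \geq \e/d$, so $\e_q = \Theta(1/d)$ in the regime of interest.

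Now suppose $\mathcal P$ is any purifier that consumes $N$ copies of $\rho^{(d)}(\e)$ and returns $\sigma$ with $\bra\psi\sigma\ket\psi \geq 1-\epsilon$ for every $\ket\psi \in \C^d$. Restrict to $\ket\psi \in V$, apply $\Lambda$ independently to each of $N$ qubit samples of $\rho^{(2)}(\e_q)$ (all with the same unknown $\ket\psi$), feed the resulting qudit states to $\mathcal P$, and project the output by $\sigma_q := P_V \sigma P_V / \Tr(P_V \sigma)$. Because $P_V \ket\psi = \ket\psi$ and $\Tr(P_V \sigma) \leq 1$, one has $\bra\psi\sigma_q\ket\psi \geq \bra\psi\sigma\ket\psi \geq 1-\epsilon$. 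This yields a qubit purifier of noise $\e_q$ to output fidelity $1-\epsilon$ using exactly $N$ qubit samples. Applying the qubit lower bound $N \geq \Omega(\e_q/\epsilon)$ implied by the optimal Cirac--Ekert--Macchiavello protocol \cite{CEM99} and substituting $\e_q = \Theta(1/d)$ gives $N = \Omega(1/(d\epsilon))$, as claimed.

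The hard part is extracting the correct noise dependence of the qubit lower bound, namely $N \geq \Omega(\e_q/\epsilon)$ as $\e_q \to 0$ rather than the weaker $\Omega(1/\epsilon)$ valid only for constant $\e_q$. The closed-form CEM99 expression for the optimal output fidelity as a function of $N$ and $\e_q$ does imply this bound, so the obstacle is really bookkeeping: one must track the small-$\e_q$ asymptotics of that formula carefully, or equivalently redo the optimality argument via a Schur--Weyl decomposition of $(\C^2)^{\otimes N}$ together with a covariant-channel characterisation of the optimal qubit purifier.
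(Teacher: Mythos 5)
Your proposal is correct and follows essentially the same route as the paper: embed the qubit instance into $\C^d$ by mixing with the maximally mixed state on the orthogonal complement, observe that the induced qubit error parameter is $\e_q = 2\e/(d-(d-2)\e) = \Theta(1/d)$, and invoke the Cirac--Ekert--Macchiavello optimal qubit fidelity bound $1 - \frac{\e_q}{2(1-\e_q)^2}\frac{1}{N}$, exactly as the paper does. One small slip: the mixing weight should be $r = (d-2)\e/d$ rather than $(d-2)\e/(2+(d-2)\e)$, though this does not affect the conclusion since only $\e_q$ enters the final bound.
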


\begin{proof}
We use the fact that a purification procedure for $d$-dimensional states can also be used to purify $2$-dimensional states, a task with known optimality bounds.  
For clarity, we write $\ket{1}$ as the target qubit state to be purified, and let $\ket{2}$ denote a state orthonormal to $\ket{1}$.  This is simply a choice of notation and entails no loss of generality, as the operations described in the proof do not depend on the states $\ket{1}$ or $\ket{2}$.  

We consider two procedures for purifying $N$ samples of the qubit state $\rho^{(2)}(\e^{(2)}) = (1-\frac{\e^{(2)}}{2})\ket{1}\bra{1}+\frac{\e^{(2)}}{2}\ket{2}\bra{2}$:
 \begin{enumerate}
 \item applying the optimal qubit procedure, denoted by $\mathcal{P}_{\mathrm{opt}}^{(2)}$, on the $N$ samples; or
  \item embedding each copy of $\rho^{(2)}(\e^{(2)})$ in the $d$-dimensional space and then applying the optimal qudit purification procedure, denoted by $\mathcal{P}_{\mathrm{opt}}^{(d)}$.
\end{enumerate}
We denote the second procedure by $\mathcal{P}_{\mathrm{embed}}^{(2)}$. Clearly it cannot outperform the optimal procedure $\mathcal{P}_{\mathrm{opt}}^{(2)}$.

The embedding operation involves 
mixing the given state $\rho^{(2)}(\e^{(2)})$ with the maximally mixed state $\frac{\Lambda}{d-2}$ in the space orthogonal to the support of $\rho^{(2)}(\e^{(2)})$.  
The embedded $d$-dimensional state is
\begin{equation}
  \label{eq:rhod_form1}
  \rho^{(d)}(\e^{(d)}) = (1-q)\rho^{(2)}(\e^{(2)}) + \frac{q}{d-2} \Lambda
\end{equation}
where
$\Lambda := \sum_{i=3}^{d} \ket{i}\bra{i}$.
We want $\rho^{(d)}(\e^{(d)})$ to be of the form
\begin{equation}
  \label{eq:rhod_form2}
  \rho^{(d)}(\e^{(d)}) = \biggl( 1 - \frac{d-1}{d}\e^{(d)}\biggr)\ket{1}\bra{1} + \frac{\e^{(d)}}{d} \sum_{i=2}^{d} \ket{i}\bra{i}.
\end{equation}
For \cref{eq:rhod_form1} and \cref{eq:rhod_form2} to describe the same state, they need to have the same coefficient for each diagonal entry.  
This requirement leads to three linear equations for $\e^{(d)}$ and $q$, 
in terms of $\e^{(2)}$.  But these three equations are linearly dependent, so it suffices to express them with the following two equations: 
\begin{align}
q = & \frac{d-2}{d} \e^{(d)}, &
\e^{(2)} &= \frac{2 \e^{(d)}}{d-(d-2) \e^{(d)}}, 
\end{align}
where the parameters $q$ and $\e^{(2)}$ are expressed in terms of $\e^{(d)}$ and $d$.  We note that for $\e^{(2)} \in (0,1)$, the corresponding $q$ and $\e^{(d)}$ are also in $(0,1)$.  

Let the final output fidelity be $F_{\mathrm{embed}}$ and $F_{\mathrm{opt}}$ for $\mathcal{P}_{\mathrm{embed}}^{(2)}$ and $\mathcal{P}_{\mathrm{opt}}^{(2)}$, respectively. Since $\mathcal{P}_{\mathrm{embed}}^{(2)}$ cannot outperform $\mathcal{P}_{\mathrm{opt}}^{(2)}$, we know that
$F_{\mathrm{embed}} \leq F_{\mathrm{opt}}$.
From the qubit case~\cite{CEM99}, the final fidelity satisfies $F_{\mathrm{opt}} \leq 1 - \frac{\e^{(2)}}{2(1-\e^{(2)})^2}\frac{1}{N}$, so if $F_{\mathrm{embed}} = 1 - \epsilon$, then $F_{\mathrm{opt}} \geq 1 -\epsilon$ and we have
\begin{equation}
  N \geq \frac{\e^{(d)}(d-(d-2)\e^{(d)})}{d^2(1-\e^{(d)})^2}\frac{1}{\epsilon}
  = \frac{\e^{(d)}}{(1-\e^{(d)})d\epsilon} + \Theta\biggl(\frac{1}{d^2\epsilon}\biggr).
\end{equation}
This shows that $ N \in \Omega(\frac{1}{d\epsilon})$.
\end{proof}

For constant $d$, the lower bound is $\Omega(\frac{1}{\epsilon})$, which matches our upper bound on the sample complexity of $\PURIFY$, showing that our approach is asymptotically optimal. 
As discussed further in the next section, the results of \cite{li2024optimal} indicate that the asymptotic scaling with $d$ in our lower bound is not optimal. 
Moreover, up to small multiplicative constants, the $\PURIFY$ procedure is asymptotically optimal in $d$ and $\epsilon$ for $\e < 1/3$.  For $\e \gg 2/3$, our analysis of the number of iterations $i^*$ to suppress the error down to $2/3$ suggests suboptimality of the dependence on $1-\e$ relative to the $\Omega((1-\e)^{-2})$ performance of the optimal protocol.

\section{Conclusions} \label{sec:con}

We have established a general purification procedure based on the swap test. This procedure is efficient, with sample complexity $O(\frac{1}{\epsilon})$ for any fixed initial error $\e$, independent of the dimension $d$. The simplicity of the procedure makes it easy to implement, and in particular, it requires a quantum memory of size only logarithmic in the number of input states consumed. Note however that it requires a quantum memory with long coherence time since partially purified states must be maintained when the procedure needs to restart.

It is natural to ask
whether the purification protocol can be asymptotically improved when the dimension is not fixed. One way of approaching this question, which is also of independent interest, is to characterize the optimal purification protocol. In a related line of work, partially reported in \cite{futhesis}, some of the authors developed a framework to study the optimal qudit purification procedure using Schur--Weyl duality. The optimal procedure can be characterized in terms of a sequence of linear programs involving the representation theory of the unitary group. Using the Clebsch--Gordan transform and Gel'fand--Tsetlin patterns, we gain insight into the common structure of all valid purification operations and give a concrete formula for the linear program behind the optimization problem. For qubits, this program can be solved to reproduce the optimal output fidelity
\begin{align}
    F_2 = 1 - \frac{1}{2(N+1)} \cdot \frac{\e}{(1-\e)^2} + O\Bigl(\frac{1}{N^2}\Bigr)
\end{align}
as established in Ref.~\cite{CEM99}. For qutrits, we show that the optimal output fidelity is
\begin{align}
    F_3 = 1 - \frac{2}{3(N+1)} \cdot \frac{\e}{(1-\e)^2} + O\Bigl(\frac{1}{N^2}\Bigr).
\end{align}
More generally, our formulation makes it easy to numerically compute the optimal fidelity and determine the optimal purification procedure for any given dimension. Due to the complexity of the Gel'fand--Tsetlin patterns, a general analytical expression for the optimal solution of the linear program was elusive,
but this problem was recently solved \cite{li2024optimal}.
It turns out that the optimal qudit fidelity is 
\begin{align}
    F_d = 1 - \frac{d-1}{d(N+1)} \cdot \frac{\e}{(1-\e)^2} + O\Bigl(\frac{1}{N^2}\Bigr),
\end{align}
so for $F_d \geq 1 - \epsilon$, the sample complexity is 
\begin{align}
    N \leq \frac{d-1}{d} \cdot \frac{\e}{\epsilon(1-\e)^2}. 
\end{align}
Comparing to the sample complexity in \cref{eq:SCn}, we see that our expected sample complexity is optimal in $d$ and $\epsilon$ up to a small multiplicative constant if $\e < 1/3$.

Simon's problem with a faulty oracle, as described in \cref{sec:simon}, is just one application of $\PURIFY$. This approach can also be applied to quantum algorithms involving parallel or sequential queries of depth $O(1)$ subject to a depolarizing oracle. However, for a quantum algorithm that makes  $\omega(1)$ sequential queries, the query overhead of our purification protocol is significant. In future work, it might be interesting to explore query complexity with faulty oracles more generally, or to develop other applications of $\PURIFY$ to quantum information processing tasks.

The swap test, which is a key ingredient in our algorithm, is a special case of generalized phase estimation~\cite[Section~8.1]{Aram}.
Specifically, the swap test corresponds to generalized phase estimation where the underlying group is $\S{2}$.
This places our algorithm in a wider context and suggests possible generalizations using larger symmetric groups $\S{k}$ for $k \geq 2$.
Such generalizations might provide a tradeoff between sample complexity and memory usage, since generalized phase estimation for $\S{k}$ operates simultaneously on $k$ samples.
It would be interesting to understand the nature of such tradeoffs for purification or for other tasks.
In particular, can one approximately implement the Schur transform using less quantum memory?

Our procedure has a pure target state $\proj{\psi}$ which is also its fixed point.
The same procedure cannot be used for mixed target states since its output might be ``too pure.''
Nevertheless, one can still ask how to implement 
the transformation 
\begin{equation}
\left( (1-\e) \rho + \e \frac{I}{d} \right)^{\otimes N} \mapsto \;
   (1-\epsilon) \rho + \epsilon \frac{I}{d}
\end{equation}
for $\epsilon \ll \e$
and for sufficiently large $N$.  This purification can be performed
using quantum state tomography.  For pure states in various regimes, our method is much more efficient than this. Is there a more sample-efficient method for mixed states? Generalization of the 
swap test to techniques such as generalized phase estimation may 
provide a solution, but we leave this question for future research.

Finally, a natural generalization of the current work is the
purification of a general unknown mixed state $\rho$ with the 
promise that the principal eigenvalue is non-degenerate.
Applications have been found in \cite{GPS24}, and ongoing
research is addressing this open problem. 

\section*{Acknowledgements}
We thank Daniel Grier, David Gross, Barbara Kraus, Meenu Kumari, Aadil Oufkir, Hakop Pashayan, Luke Schaeffer, and Nengkun Yu for very helpful discussions.

AMC, HF, DL, and VV received support from NSERC during part of this project.
AMC also received support from NSF (grants 1526380, 1813814, and QLCI grant OMA-2120757).
HF acknowledges the support of the NSF QLCI program (grant OMA-2016245).
DL and ZL, via the Perimeter Institute, are supported in part by the Government of Canada through the
Department of Innovation, Science and Economic Development and by the
Province of Ontario through the Ministry of Colleges and Universities.
MO was supported by an NWO Vidi grant (Project No.~VI.Vidi.192.109).

\bibliographystyle{alphaurl}
\bibliography{purification}

\clearpage
\appendix

\section{Output error parameter from swap test as a function of input errors} \label{sec:io}

Here, we analyse the output error parameter when the input states for the swap test gadget (\cref{alg:swap})
are not necessarily the same.  
For the purpose of purification, we are particularly interested in the special case when both input states, $\rho$ and $\sigma$, are of the form
\begin{equation}
  \rho(\e) := (1 - \e) \proj{\psi} + \e \, \frac{I}{d}
\end{equation}
for some \emph{error parameter} $\e \in (0,1)$, where $\ket{\psi} \in \C^d$ is an unknown but fixed $d$-dimensional state (the dimension $d \geq 2$ is often implicit).  
The following claim shows that the $\SWAP$ gadget preserves the form of these states, irrespective of the actual state $\ket{\psi}$, and derives the error parameter of the output state as a function of both input error parameters.

\begin{proposition}\label{claim:es}
If the input states of \cref{alg:swap} are $\rho(\e_1)$ and $\rho(\e_2)$ 
as described in \cref{eq:re},
then
\begin{equation}
  \SWAP \of[\big]{\rho(\e_1),\rho(\e_2)} = \rho(\e')
  \qquad
  \text{where}
  \qquad
  \e' := \frac{\frac{\e_1 + \e_2}{2} + \frac{\e_1 \e_2}{d}}
                    {(1 + \frac{1}{d}) + (1 - \frac{1}{d}) (1 - \e_1) (1 - \e_2)}.
  \label{eq:e'}
\end{equation}
\end{proposition}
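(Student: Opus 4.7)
The plan is to substitute directly into the closed form \eqref{eq:SWAP} for the $\SWAP$ gadget output, namely
\[
\SWAP(\rho,\sigma) = \frac{\rho + \sigma + \rho\sigma + \sigma\rho}{2\bigl(1 + \Tr(\rho\sigma)\bigr)},
\]
and verify by inspection that the result again has the depolarized form $\rho(\e')$. The key structural observation is that any state of the form $\rho(\e) = (1-\e)\proj{\psi} + \frac{\e}{d} I$ lies in the two-dimensional commutative algebra spanned by $P := \proj{\psi}$ and $I$. In particular, $\rho(\e_1)$ and $\rho(\e_2)$ commute, so the anticommutator $\rho_1\sigma + \sigma\rho_1$ in the numerator reduces to $2\rho(\e_1)\rho(\e_2)$, and moreover the product $\rho(\e_1)\rho(\e_2)$ still belongs to the same two-dimensional algebra. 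Hence the entire output is of the form $\alpha P + \beta I$ for some scalars $\alpha,\beta$ with $\alpha + \beta d = 1$, and we can read off $\e' = \beta d$ directly.

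Concretely, I would parametrize $\rho(\e_i) = a_i P + b_i I$ with $a_i = (1 - \e_i) - \e_i/d$... actually, cleaner: set $a_i := 1 - \e_i$ and $b_i := \e_i/d$, so $\rho(\e_i) = a_i P + b_i I$. Using $P^2 = P$, one gets
\[
\rho(\e_1)\rho(\e_2) = (a_1 a_2 + a_1 b_2 + a_2 b_1) P + b_1 b_2 I,
\]
from which $\Tr(\rho(\e_1)\rho(\e_2)) = a_1 a_2 + a_1 b_2 + a_2 b_1 + b_1 b_2 d$ after tracing against $\Tr(P)=1$ and $\Tr(I)=d$. Adding $\rho(\e_1)+\rho(\e_2)$ and dividing by $2(1+\Tr(\rho_1\rho_2))$ then yields the output of the gadget as an explicit combination $\alpha P + \beta I$. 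Matching $\beta = \e'/d$ and substituting $a_i = 1-\e_i$, $b_i = \e_i/d$ should produce exactly the expression claimed for $\e'$, namely
\[
\e' = \frac{\frac{\e_1 + \e_2}{2} + \frac{\e_1 \e_2}{d}}{\bigl(1 + \tfrac{1}{d}\bigr) + \bigl(1 - \tfrac{1}{d}\bigr)(1-\e_1)(1-\e_2)}.
\]

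There is no real conceptual obstacle here; the entire argument is a short calculation once commutativity is exploited to collapse the anticommutator. The one spot that invites arithmetic slips is the simplification of the denominator $2(1 + \Tr(\rho_1 \rho_2))$: after expanding $a_i = 1-\e_i$ and $b_i = \e_i/d$, terms group naturally into $(1+1/d) + (1-1/d)(1-\e_1)(1-\e_2)$, and the main care is to verify this repackaging by checking a few limiting values (e.g.\ $\e_1=\e_2$, in which case the formula must reduce to \eqref{eq:e}, and $\e_1 = 0$, in which case the output should remain pure along $P$, giving $\e' = \e_2/(2 - \e_2 + \e_2/d)$ or similar). These sanity checks make it easy to catch bookkeeping errors before writing up the final substitution.
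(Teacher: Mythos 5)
Your proposal is correct and follows essentially the same route as the paper's proof: both exploit that $\rho(\e_1)$ and $\rho(\e_2)$ live in the commutative algebra spanned by $\proj{\psi}$ and $I$, collapse the anticommutator to $2\rho(\e_1)\rho(\e_2)$, and read off $\e'$ by matching the coefficient of $I/d$ against the denominator $1+\Tr(\rho(\e_1)\rho(\e_2)) = (1+\tfrac{1}{d})+(1-\tfrac{1}{d})(1-\e_1)(1-\e_2)$. (Only a cosmetic caveat: in your $\e_1=0$ sanity check the output is not pure and your guessed value of $\e'$ is off by a factor of $2$ in the denominator, but this does not affect the argument.)
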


\begin{proof}
\Cref{eq:re} shows that both input states are linear combinations of projectors $\set{\proj{\psi}, I}$.
Since this set is closed under matrix multiplication, the output state in \cref{eq:SWAP} is also a linear combination of $\proj{\psi}$ and $I$.
By construction, the output is positive semidefinite and has unit trace, so it must be of the form $\rho(\e')$ for some $\e' \in [0,1]$.

Note that $\rho(\e_1)$ and $\rho(\e_2)$ are diagonal in the same basis, so
$\rho(\e_1) \rho(\e_2) = \rho(\e_2) \rho(\e_1)$
and hence the numerator of \cref{eq:SWAP} is
$\rho(\e_1) + \rho(\e_2) + 2 \rho(\e_1) \rho(\e_2)$.
Using \cref{eq:re}, the coefficient of $I$ in this expression is
\begin{equation}
  \alpha := \frac{\e_1}{d} + \frac{\e_2}{d} + 2 \frac{\e_1 \e_2}{d^2}.
\end{equation}
The denominator of \cref{eq:SWAP} is given by
\begin{equation}
\begin{aligned}
  \beta
 & := 1 + \Tr \of[\big]{\rho(\e_1) \rho(\e_2)} \\
 &\:= 1 + (1 - \e_1) (1 - \e_2)
        + (1 - \e_1) \frac{\e_2}{d}
        + \frac{\e_1}{d} (1 - \e_2)
        + \frac{\e_1}{d} \frac{\e_2}{d} d \\
 &\:= 1 + \of[\Big]{1 - \frac{1}{d}} (1 - \e_1) (1 - \e_2)
        + \frac{1}{d}
          \of[\big]{(1 - \e_1) + \e_1}
          \of[\big]{(1 - \e_2) + \e_2} \\
 &\:= \of[\Big]{1 + \frac{1}{d}}
    + \of[\Big]{1 - \frac{1}{d}} (1 - \e_1) (1 - \e_2). \label{eq:beta}
\end{aligned}
\end{equation}
The coefficient of $I/d$ for the output state in \cref{eq:SWAP} is thus $\frac{d}{2} \cdot \frac{\alpha}{\beta}$, which agrees with $\e'$ in \cref{eq:e'}.
\end{proof}

\Cref{claim:es} tells us how the $\SWAP$ gadget affects the error parameter $\delta$.
In particular, the output state is purer than both input states (i.e., $\delta < \min \set{\delta_1, \delta_2}$) only when $\delta_1$ and $\delta_2$ are sufficiently close.
The region of values $(\delta_1,\delta_2)$ for which this holds is illustrated in \cref{fig:Region} for several different $d$.

\begin{figure}[H]
\centering
\includegraphics[width = 0.5\textwidth]{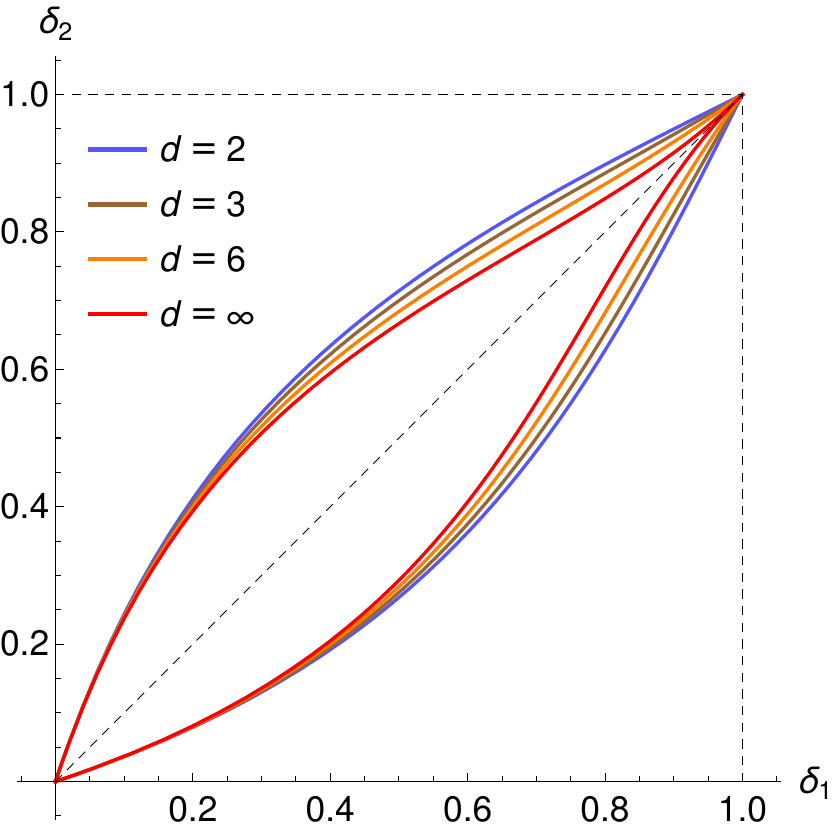}
\caption{\label{fig:Region}The outer boundary of the region in the $(\e_1,\e_2)$ plane where the $\SWAP$ gadget (\cref{alg:swap}) improves the purity of the output compared to both inputs (see \cref{claim:edif}), for dimensions $d \in \set{2,3,6,\infty}$. As $d$ increases, the region shrinks and converges to the area enclosed by the red curve as $d \to \infty$. For any $d$, the region contains a substantial area surrounding the $\e_1 = \e_2$ line, except for the tip at $\delta_1 = \delta_2 = 1$ which becomes infinitely sharp at $d = \infty$.}
\end{figure}

\begin{proposition}\label{claim:edif}
Let $\e_1, \e_2 \in (0,1)$ and assume (without loss of generality) that $\e_2 \geq \e_1$. The output of the $\SWAP$ gadget (see \cref{alg:swap}) is purer than both inputs $\rho(\e_1)$ and $\rho(\e_2)$ if and only if
\begin{equation}
  \e_2 - \e_1
  < (1 - \e_1)
    \frac{2 \e_1 (d - (d-1) \e_1)}
     {d + 2 \e_1 (d - (d-1) \e_1)}.
  \label{eq:e2e1}
\end{equation}
\end{proposition}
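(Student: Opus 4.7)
The plan is to show that the stated inequality is exactly equivalent to $\e' < \e_1$, where $\e'$ is given by the formula from \cref{claim:es}. Since we assume $\e_2 \geq \e_1$, the requirement ``output purer than both inputs'' amounts to $\e' < \min\set{\e_1,\e_2} = \e_1$; the inequality $\e' < \e_2$ is automatic once $\e' < \e_1$. So everything reduces to manipulating a single inequality.

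First I would substitute the formula for $\e'$ from \cref{claim:es} into $\e' < \e_1$ and multiply through by the denominator (positive by inspection of \cref{eq:beta}) and by $2d$ to clear fractions. This yields an inequality of the form
\begin{equation}
d(\e_1+\e_2) + 2\e_1\e_2 - 2\e_1(d+1) - 2\e_1(d-1)(1-\e_1)(1-\e_2) < 0.
\end{equation}
The key simplification is to group $d(\e_1+\e_2) - 2\e_1(d+1) = d(\e_2-\e_1) - 2\e_1$ and then combine with the remaining $\e_1$-terms to recognize the factor $1 + (d-1)(1-\e_1) = d-(d-1)\e_1$. After this grouping, the inequality collapses to
\begin{equation}
d(\e_2-\e_1) < 2\e_1(1-\e_2)[d-(d-1)\e_1].
\end{equation}

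Next I would convert this into the form claimed by substituting $1-\e_2 = (1-\e_1) - (\e_2-\e_1)$. Writing $A := d-(d-1)\e_1$ and $x := \e_2-\e_1$ for brevity, the inequality becomes $dx < 2\e_1(1-\e_1)A - 2\e_1 x A$, which rearranges to $x(d + 2\e_1 A) < 2\e_1(1-\e_1)A$. Since $d + 2\e_1 A > 0$, dividing gives exactly \cref{eq:e2e1}. Each step is an equivalence, so the ``if and only if'' follows.

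The computation is routine; the only subtlety is the clever regrouping needed to uncover the factor $d-(d-1)\e_1$ and the substitution $1-\e_2 = (1-\e_1)-(\e_2-\e_1)$ that turns the asymmetric expression into a bound on $\e_2 - \e_1$ alone. With those two tricks in hand, the derivation is essentially forced.
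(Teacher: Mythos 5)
Your proposal is correct and follows essentially the same route as the paper: both reduce the claim to $\e' < \e_1$ using \cref{claim:es}, clear the (positive) denominators, and manipulate the resulting polynomial inequality into the stated form. The only difference is cosmetic — the paper isolates $\e_2$ in an affine inequality and then subtracts $\e_1$, whereas you group into $\e_2-\e_1$ terms directly via the substitution $1-\e_2 = (1-\e_1)-(\e_2-\e_1)$ — and your algebra checks out.
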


\begin{proof}
Since we assumed that $\e_2 \geq \e_1$, the output is purer than both inputs whenever $\e_1 > \e'$ where $\e'$ is given in \cref{claim:es}. The denominator of $\e'$ is always positive, so $\e_1 > \e'$ is equivalent to
\begin{equation}
  \e_1
  \of[\bigg]{
    \of[\Big]{1 + \frac{1}{d}} +
    \of[\Big]{1 - \frac{1}{d}} (1 - \e_1) (1 - \e_2)
  }
  > \frac{\e_1 + \e_2}{2} + \frac{\e_1 \e_2}{d}.
\end{equation}
If we multiply both sides by $2 d$ and expand some products, we get
\begin{equation}
  2 \e_1 (d + 1) + 2 \e_1 (d - 1) (1 - \e_1 - \e_2 + \e_1 \e_2)
  > d (\e_1 + \e_2) + 2 \e_1 \e_2,
\end{equation}
which can be simplified to an affine function in $\e_2$:
\begin{equation}
        \of[\big]{3 d \e_1 - 2 (d-1) \e_1^2}
  - \of[\big]{d + 2 d \e_1 - 2 (d-1) \e_1^2} \e_2 > 0.
\end{equation}
This inequality is equivalent to
\begin{equation}
  \frac{3 d \e_1 - 2 (d-1) \e_1^2}
   {d + 2 d \e_1 - 2 (d-1) \e_1^2} > \e_2,
   \label{eq:e2frac}
\end{equation}
where the denominator agrees with \cref{eq:e2e1} and is always positive since $\e_1 > \e_1^2$. If we subtract $\e_1$ from both sides of \cref{eq:e2frac}, the numerator becomes
\begin{equation}
\begin{aligned}
  \of[\big]{ 3 d \e_1 - 2 (d-1) \e_1^2} - \of[\big]{d + 2 d \e_1 - 2 (d-1) \e_1^2 } \e_1
  &= 2 d \e_1 - 2 (d-1) \e_1^2 - 2 d \e_1^2 + 2 (d-1) \e_1^3 \\
  &= 2 \e_1 (1 - \e_1) \of[\big]{ d - (d-1) \e_1 },
\end{aligned}
\end{equation}
which agrees with \cref{eq:e2e1}.
\end{proof}
Note that the right-hand side of \cref{eq:e2e1} is strictly positive for any $\e_1 \in (0,1)$. In particular, if $\e_1 = \e_2 \in (0,1)$ then the left-hand side vanishes while the right-hand side is strictly positive. This means that for identical inputs, the output is always\footnote{We exclude the extreme cases when both inputs are completely pure or maximally mixed, since in these cases the purity cannot be improved in principle: a pure state cannot get any purer, and the maximally mixed state has no information about the desired (pure) target state $\ket{\psi}$.} purer than both inputs, giving another proof of \cref{cor:better} (see also \cref{fig:Region}).

\end{document}